\def\circarrow{{\circ\hspace{0.3mm}\!\!\! \rightarrow}}
\def\circlinecirc{{\circ \hspace{0.4mm}\!\!\! - \hspace{0.4mm}\!\!\!\circ}}
\def\circline{{\circ \! -}}
\def\arrowcirc{{\leftarrow\hspace{0.3mm}\!\!\! \circ}}
\newcommand\ci{\perp\!\!\!\perp}
\newtheorem{assumption}{Assumption}
\definecolor{amaranth}{rgb}{0.99, 0.76, 0.8}
\begin{document}

\title{Fast Causal Inference with Non-Random Missingness by Test-Wise Deletion}

\author{Eric V. Strobl \and Shyam Visweswaran \and Peter L. Spirtes
}

%\authorrunning{Short form of author list} % if too long for running head

\institute{E. V. Strobl \at
              5607 Baum Blvd, Pittsburgh PA 15206 \\
              Tel.: 412-624-5100\\
              Fax: 412-624-5310\\
              \email{ericvonstrobl@gmail.com}          
}

\date{Received: date / Accepted: date}
% The correct dates will be entered by the editor

\maketitle

\begin{abstract}
Many real datasets contain values missing not at random (MNAR). In this scenario, investigators often perform list-wise deletion, or delete samples with \textit{any} missing values, before applying causal discovery algorithms. List-wise deletion is a sound and general strategy when paired with algorithms such as FCI and RFCI, but the deletion procedure also eliminates otherwise good samples that contain only a few missing values. In this report, we show that we can more efficiently utilize the observed values with \textit{test-wise deletion} while still maintaining algorithmic soundness. Here, test-wise deletion refers to the process of list-wise deleting samples only among the variables required for each conditional independence (CI) test used in constraint-based searches. Test-wise deletion therefore often saves more samples than list-wise deletion for each CI test, especially when we have a sparse underlying graph. Our theoretical results show that test-wise deletion is sound under the justifiable assumption that none of the missingness mechanisms causally affect each other in the underlying causal graph. We also find that FCI and RFCI with test-wise deletion outperform their list-wise deletion and imputation counterparts on average when MNAR holds in both synthetic and real data. 
\keywords{Causal Inference \and Missing Values \and Missing Not at Random \and MNAR}
% \PACS{PACS code1 \and PACS code2 \and more}
% \subclass{MSC code1 \and MSC code2 \and more}
\end{abstract}

\section{The Problem}

Many real observational datasets contain missing values, but modern constraint-based causal discovery (CCD) algorithms require complete data. These facts force many investigators to either perform list-wise deletion or imputation on their datasets. The first strategy can unfortunately result in the loss of many good samples just because of a few missing values. On the other hand, the second strategy can corrupt the joint distribution when the corresponding assumptions do not hold. Both of these approaches therefore can (and often do) degrade the performance of CCD algorithms. We thus seek a practical method which allows CCD algorithms to efficiently utilize the measured values while placing few assumptions on the missingness mechanism(s). 

We specifically choose to tackle the most general case of values missing not at random (MNAR), where missing values may depend on other missing values. MNAR stands in contrast to values missing at random (MAR), where missing values can only depend on the measured values. MAR thus ensures recoverability of the underlying distribution from the measured values alone. In causal discovery, investigators usually deal with MAR by performing imputation and then running a CCD algorithm on the completed data \cite{Sokolova15,Sokolova17}. Causal discovery under MAR therefore admits a straightforward solution, once an investigator has access to a sound imputation method.

Causal discovery under MNAR requires a more sophisticated approach than causal discovery under MAR. Investigators have developed three general strategies for handling the MNAR case thus far. The first approach assumes access to some background knowledge for modeling the missingness mechanism, typically encoded using graphs \cite{Daniel12,Mohan13,Shpitser15}. Investigators with deep knowledge about the dataset at hand can therefore use this strategy to convert the MNAR problem into a more manageable form. However, access to background knowledge is arguably scarce in many situations or prone to error. The second solution involves placing an extra assumption on the missingness mechanism(s) so that we may combine the results of multiple runs of a CCD algorithm; in particular, we assume that a dataset with missing values can be decomposed into multiple datasets with potentially non-overlapping variables subject to the same set of selection variables \cite{Tillman08,Tillman11,Tillman14,Triantafilou10}. The problem of missing values therefore reduces to a problem of combining multiple datasets. Investigators nevertheless often find the assumption of identical selection bias across datasets hard to justify in practice. The third most general solution involves running a CCD algorithm that can handle selection bias on a list-wise deleted dataset, where investigators remove samples that contain any missing values \cite{Spirtes01}. List-wise deletion is nonetheless sample inefficient, because it eliminates samples with only a mild number of missing values. We therefore conclude that the three aforementioned strategies for the MNAR case can carry unsatisfactory limitations in real situations.

In this report, we propose to handle the MNAR case in CCD algorithms using a different strategy involving test-wise deletion. Here, test-wise deletion refers to the process of only performing list-wise deletion among the variables required for each conditional independence (CI) test. We develop the test-wise deletion procedure in detail throughout this report as follows. First, we provide background material in Section \ref{sec_prelim}. We then characterize missingness using graphical models augmented with missingness indicators in Sections \ref{sec_SB} and \ref{sec_missing}. Next, we justify test-wise deletion in Sections \ref{sec_ass} and \ref{sec_theory} under the assumption that certain sets of missingness indicators do not causally affect each other in the underlying causal graph. These results lead to our final solution in Section \ref{sec_alg}. We also list experimental results in Section \ref{sec_exps} which highlight the benefits of the Fast Causal Inference (FCI) algorithm and the Really Fast Causal Inference (RFCI) algorithm with test-wise deletion as opposed to the same algorithms with list-wise deletion or imputation. Finally, we conclude the paper with a short discussion in Section \ref{sec_conc}.

\section{Preliminaries} \label{sec_prelim}

We will represent causality by Markovian graphs. We therefore require some basic graphical definitions.

A graph $\mathbb{G}=(\bm{X}, \mathcal{E})$ consists of a set of vertices $\bm{X}=\{ X_1, \dots, X_p \}$ and a set of edges $\mathcal{E}$. The edge set $\mathcal{E}$ may contain the following six edge types: $\rightarrow$ (directed), $\leftrightarrow$ (bidirected), --- (undirected), $\circarrow$ (partially directed), $\circline$ (partially undirected) and $\circlinecirc$ (nondirected). Notice that these six edges utilize three types of endpoints including \textit{tails}, \textit{arrowheads}, and \textit{circles}.

We call a graph containing only directed edges as a \textit{directed graph}. On the other hand, a \textit{mixed graph} contains directed, bidirected and undirected edges. We say that $X_i$ and $X_j$ are \textit{adjacent} in a graph, if they are connected by an edge independent of the edge's type. An \textit{(undirected) path} $\pi$ between $X_i$ and $X_j$ is a set of consecutive edges (also independent of their type) connecting the variables such that no vertex is visited more than once. A \textit{directed path} from $X_i$ to $X_j$ is a set of consecutive directed edges from $X_i$ to $X_j$ in the direction of the arrowheads. A \textit{cycle} occurs when a path exists from $X_i$ to $X_j$, and $X_j$ and $X_i$ are adjacent. More specifically, a directed path from $X_i$ to $X_j$ forms a \textit{directed cycle} with the directed edge $X_j \rightarrow X_i$ and an \textit{almost directed cycle} with the bidirected edge $X_j \leftrightarrow X_i$.

Three vertices $\{X_i,X_j,X_k\}$ form an \textit{unshielded triple}, if $X_i$ and $X_j$ are adjacent, $X_j$ and $X_k$ are adjacent, but $X_i$ and $X_k$ are not adjacent. We call a nonendpoint vertex $X_j$ on a path $\pi$ a \textit{collider} on $\pi$, if both the edges immediately preceding and succeeding the vertex have an arrowhead at $X_j$. Likewise, we refer to a nonendpoint vertex $X_j$ on $\pi$ which is not a collider as a \textit{non-collider}. Finally, an unshielded triple involving $\{X_i,X_j,X_k\}$ is more specifically called a \textit{v-structure}, if $X_j$ is a collider on the subpath $\langle X_i,X_j,X_k \rangle$.

We say that $X_i$ is an \textit{ancestor} of $X_j$ (and $X_j$ is a \textit{descendant} of $X_i$) if and only if there exists a directed path from $X_i$ to $X_j$ or $X_i = X_j$. We write $X_i \in \bm{An}(X_j)$ to mean $X_i$ is an ancestor of $X_j$ and $X_j \in \bm{De}(X_i)$ to mean $X_j$ is a descendant of $X_i$. We also apply the definitions of an ancestor and descendant to a set of vertices $\bm{Y} \subseteq \bm{X}$ as follows: 
\begin{equation} \nonumber
\begin{aligned}
\bm{An}(\bm{Y}) &= \{X_i | X_i \in \bm{An}(X_j) \text{ for some } X_j \in \bm{Y}\},\\
\bm{De}(\bm{Y}) &= \{X_i | X_i \in \bm{De}(X_j) \text{ for some } X_j \in \bm{Y}\}.
\end{aligned}
\end{equation}
We call a directed graph a \textit{directed acyclic graph} (DAG), if it does not contain directed cycles. Every DAG is a type of \textit{ancestral graph}, or a mixed graph that (1) does not contain directed cycles, (2) does not contain almost directed cycles, and (3) for any undirected edge $X_i - X_j$
in $\mathcal{E}$, $X_i$ and $X_j$ have no parents or spouses.

\subsection{Causal Interpretation of Graphs} \label{sec_prob_graphs}

Consider a stochastic causal process with a distribution $\mathbb{P}$ over $\bm{X}$ that satisfies the \textit{Markov property}. A distribution satisfies the Markov property if it admits a density that ``factorizes according to the DAG'' as follows:
\begin{equation} \label{fac}
f(\bm{X})=\prod_{i=1}^{p} f(X_i | \bm{Pa}(X_i)).
\end{equation}
\noindent We can in turn relate \eqref{fac} to a graphical criterion called d-connection. Specifically, if $\mathbb{G}$ is a directed graph in which $\bm{A}$, $\bm{B}$ and $\bm{C}$ are disjoint sets of vertices in $\bm{X}$, then $\bm{A}$ and $\bm{B}$ are \textit{d-connected} by $\bm{C}$ in the directed graph $\mathbb{G}$ if and only if there exists an \textit{active path} $\pi$ between some vertex in $\bm{A}$ and some vertex in $\bm{B}$ given $\bm{C}$. An active path between $\bm{A}$ and $\bm{B}$ given $\bm{C}$ refers to an undirected path $\pi$ between some vertex in $\bm{A}$ and some vertex in $\bm{B}$ such that, for every collider $X_i$ on $\pi$, a descendant of $X_i$ is in $\bm{C}$ and no non-collider on $\pi$ is in $\bm{C}$. A path is \textit{inactive} when it is not active. Now $\bm{A}$ and $\bm{B}$ are \textit{d-separated} by $\bm{C}$ in $\mathbb{G}$ if and only if they are not d-connected by $\bm{C}$ in $\mathbb{G}$. For shorthand, we will write $\bm{A} \ci_d \bm{B} | \bm{C}$ and $\bm{A} \not \ci_d \bm{B} | \bm{C}$ when $\bm{A}$ and $\bm{B}$ are d-separated or d-connected given $\bm{C}$, respectively. The conditioning set $\bm{C}$ is called a \textit{minimal separating set} if and only if $\bm{A} \ci_d \bm{B} | \bm{C}$ but $\bm{A}$ and $\bm{B}$ are d-connected given any proper subset of $\bm{C}$. 

Now if we have $\bm{A} \ci_d \bm{B} | \bm{C}$, then $\bm{A}$ and $\bm{B}$ are conditionally independent given $\bm{C}$, denoted as $\bm{A} \ci \bm{B} | \bm{C}$, in any joint density factorizing according to \eqref{fac}; we refer to this property as the \textit{global directed Markov property}. We also refer to the converse of the global directed Markov property as \textit{d-separation faithfulness}; that is, if $\bm{A} \ci \bm{B}|\bm{C}$, then $\bm{A}$ and $\bm{B}$ are d-separated given $\bm{C}$. One can in fact show that the factorization in \eqref{fac} and the global directed Markov property are equivalent, so long as the distribution over $\bm{X}$ admits a density \cite{Lauritzen90}.\footnote{We will only consider distributions which admit densities in this report.}

 A \textit{maximal ancestral graph} (MAG) is an ancestral graph where every missing edge corresponds to a conditional independence relation. We specifically partition $\bm{X} = \bm{O} \cup \bm{L} \cup \bm{S}$ into observable, latent and selection variables, respectively. One can then transform a DAG $\mathbb{G}=(\bm{O} \cup \bm{L} \cup \bm{S}, \mathcal{E})$ into a MAG $\widetilde{\mathbb{G}}=(\bm{O},\widetilde{\mathcal{E}})$ as follows. First, for any pair of vertices $\{O_i, O_j\}$, make them adjacent in $\widetilde{\mathbb{G}}$ if and only if there is an \textit{inducing path} between $O_i$ and $O_j$ in $\mathbb{G}$. We define an inducing path as follows:
\begin{definition}
A path $\pi$ between $O_i$ and $O_j$ is called an inducing path with respect to $\bm{L}$ and $\bm{S}$ if and only if every collider on $\pi$ is an ancestor of $\{O_i,O_j\} \cup \bm{S}$, and every non-collider on $\pi$ (except for the endpoints) is in $\bm{L}$.
\end{definition}
\noindent Note that two observables $O_i$ and $O_j$ are connected by an inducing path if and only if there are d-connected given any $\bm{W} \subseteq \bm{O} \setminus \{ O_i, O_j \}$ as well as $\bm{S}$. Then, for each adjacency $O_i *\!\! -\!\! * O_j$ in $\widetilde{\mathbb{G}}$, place an arrowhead at $O_i$ if $O_i \not \in \bm{An}(O_j \cup \bm{S})$ and place a tail otherwise. The MAG of a DAG is therefore a kind of marginal graph that does not contain the latent or selection variables, but does contain information about the ancestral relations between the observable and selection variables in the DAG. The MAG also has the same d-separation relations as the DAG, specifically among the observable variables conditional on the selection variables \cite{Spirtes96}. 

\subsection{The FCI Algorithm} \label{sec_FCI}

The FCI algorithm considers the following problem: assume that the distribution of $\bm{X} = \bm{O} \cup \bm{L} \cup \bm{S}$ is d-separation faithful to an unknown DAG. Then, given oracle information about the conditional independencies between any pair of variables $O_i$ and $O_j$ given any $\bm{W} \subseteq \bm{O}\setminus \{O_i,O_j \}$, reconstruct as much information about the underlying DAG as possible. The FCI algorithm ultimately accomplishes this goal by reconstructing a MAG up to its Markov equivalence class, or the set of the MAGs with the same conditional independence relations over $\bm{O}$ given $\bm{S}$.

The FCI algorithm represents the Markov equivalence class of MAGs, or the set of MAGs with the same conditional dependence and independence relations between variables in $\bm{O}$ given $\bm{S}$, using a \textit{completed partial maximal ancestral graph} (CPMAG).\footnote{The CPMAG is also known as a partial ancestral graph (PAG). However, we will use the term CPMAG in order to mimic the use of the term CPDAG.} A \textit{partial maximal ancestral graph} (PMAG) is nothing more than a MAG with some circle endpoints. A PMAG is \textit{completed} when the following conditions hold: (1) every tail and arrowhead also exists in every MAG belonging to the Markov equivalence class of the MAG, and (2) there exists a MAG with a tail and a MAG with an arrowhead in the Markov equivalence class for every circle endpoint. Each edge in the CPMAG also has the following interpretation:
\begin{enumerate}[label=(\roman*)]\label{edge_interp_1}
\item An edge is absent between two vertices $O_i$ and $O_j$ if and only if there exists some $\bm{W} \subseteq \bm{O}\setminus \{O_i, O_j\}$ such that $O_i \ci O_j | (\bm{W}, \bm{S})$. That is, an edge is absent if and only if there does not exist an inducing path between $O_i$ and $O_j$ given $\bm{S}$.
\item If an edge between $O_i$ and $O_j$ has an arrowhead at $O_j$, then $O_j \not \in \bm{An}(O_i \cup \bm{S})$.
\item If an edge between $O_i$ and $O_j$ has a tail at $O_j$, then $O_j \in \bm{An}(O_i \cup \bm{S})$.
\end{enumerate}

The FCI algorithm learns the CPMAG through a three step procedure. Most of the algorithmic details are not important for this paper, so we refer the reader to \cite{Spirtes00} and \cite{Zhang08} for algorithmic details. However, three components of FCI called \textit{v-structure discovery}, \textit{orientation rule 1} (R1), and the \textit{discriminating path rule} (R4) are important. V-structure discovery reads as follows: suppose $O_i$ and $O_k$ are adjacent, $O_j$ and $O_k$ are adjacent, but $O_i$ and $O_j$ are non-adjacent. Further assume that we have $O_i \ci O_j | (\bm{W},$ $\bm{S})$ with $\bm{W} \subseteq \bm{O} \setminus \{O_i, O_j\}$ and $O_k \not \in \bm{W}$. Then orient the triple $\langle O_i, O_k, $ $O_j \rangle$ as $O_i \circarrow O_k \arrowcirc O_j$. R1 reads as follows: if we have $O_i \circarrow O_k \circline \!\! * O_j$, $O_i$ $\ci O_j | (\bm{W}, \bm{S})$ with $\bm{W} \subseteq \bm{O} \setminus \{O_i, O_j\}$ minimal, and $O_k \in \bm{W}$, then orient $O_i \circarrow O_k \circline \!\! * O_j$ as $O_i \circarrow O_k \rightarrow O_j$; here, the asterisk represents a placeholder for either a tail, arrowhead or circle. R4 involves the detection of additional colliders in certain shielded triples. 

\subsection{The RFCI Algorithm}

Discovering inducing paths can require large d-separating sets, so the FCI algorithm often takes too long to complete. The RFCI algorithm \cite{Colombo12} resolves this problem by recovering a graph where the presence and absence of an edge have the following modified interpretations:
\begin{enumerate}[label=(\roman*)]\label{edge_interp_2}
\item The absence of an edge between two vertices $O_i$ and $O_j$ implies that there exists some $\bm{W} \subseteq \bm{O}\setminus \{O_i, O_j\}$ such that $O_i \ci O_j | (\bm{W}, \bm{S})$.
\item The presence of an edge between two vertices $O_i$ and $O_j$ implies that $O_i \not \ci O_j | (\bm{W}, \bm{S})$ for all $\bm{W} \subseteq \bm{Adj}(O_i) \setminus O_j$ and for all $\bm{W} \subseteq \bm{Adj}(O_j) \setminus O_i$. Here $\bm{Adj}(O_i)$ denotes the set of vertices adjacent to $O_i$ in RFCI's graph.
\end{enumerate}
\noindent We encourage the reader to compare these edge interpretations to the edge interpretations of FCI's CPMAG. 

The RFCI algorithm learns its graph (not necessarily a CPMAG) also through a three step procedure. We refer the reader to \cite{Colombo12} for algorithmic details.

\section{Selection Bias} \label{sec_SB}

\textit{Selection bias} refers to the preferential selection of samples from $\mathbb{P}_{\bm{X}}$, potentially due to some unknown factors $\bm{S} \subseteq \bm{X}$. Such preferential selection occurs in a variety of real-world contexts. For example, a psychologist may wish to discover principles of the mind that apply to the general population, but he or she may only have access to data collected from college students. A medical investigator may similarly wish to elucidate a disease process occurring in all patients with the disease, but he or she may only have samples collected from low income patients in Chicago who chose to enroll in the investigator's study.

We can represent selection bias graphically using a DAG over $\bm{X}= \{\bm{O} \cup \bm{L} \cup \bm{S} \}$. We specifically let $\bm{S}$ denote a set of binary indicator variables taking values in $\{0,1\}$. Wlog, we then say that a sample is selected if and only if all of the indicator variables in $\bm{S}$ take on a value of one. The preferential selection of samples due to selection bias therefore amounts to conditioning on $\bm{S}=1$; in other words, we no longer have access to i.i.d. samples from $\mathbb{P}_{\bm{O}}$ but rather i.i.d samples from $\mathbb{P}_{\bm{O}|\bm{S}=1}$.

As an example, let $\bm{X}=\{X_1, \dots, X_5\}$, $\bm{O} = \{ X_1, X_3 \}$ and $\bm{L}=\{X_4, X_5\}$. Also let $\bm{S} = X_2$ correspond to a binary variable taking the value of 1 when $X_1$ is less than 50K and 0 otherwise. Consider drawing i.i.d. samples from a joint distribution $\mathbb{P}_{\bm{X}}$ as shown in Figure \ref{table_SB1}; here, each \textit{sample} corresponds to a row in the table. The caveat however is that we can only observe the values of $\{ X_1, X_3 \}$ when $X_1$ is below 50K as highlighted in blue in Figure \ref{table_SB1}. We therefore observe $\{X_1, X_3\}$ when $X_2 = \bm{S}$ takes on a value of 1, and otherwise we do not. In the real world, this situation may correspond to a physician who wants to measure the income $X_1$ and resting systolic blood pressure (SBP) $X_3$ of patients in the true patient population. The physician can nevertheless only measure $\{X_1, X_3\}$ in patients with low income, since patients with low income tend to enroll in medical studies more often than patients with high income. Thus, we no longer have access to i.i.d. samples from $\mathbb{P}_{X_1 X_3}$ but rather i.i.d. samples from $\mathbb{P}_{X_1 X_3|X_2=1}$, or equivalently $\mathbb{P}_{\bm{O}| \bm{S}=1}$.

We can represent the causal process in the above example using the probabilistic DAG represented in Figure \ref{fig_SB}. Here, we interpret $X_2$ as a child of $X_1$, since $X_2$ represents an indicator variable that takes on values according to the values of $X_1$. Notice also the double sided vertex in Figure \ref{fig_SB} which denotes the conditioning on low income when $X_2=1$.

\definecolor{babyblue}{rgb}{0.54, 0.81, 0.94}
\definecolor{celadon}{rgb}{0.67, 0.88, 0.69}

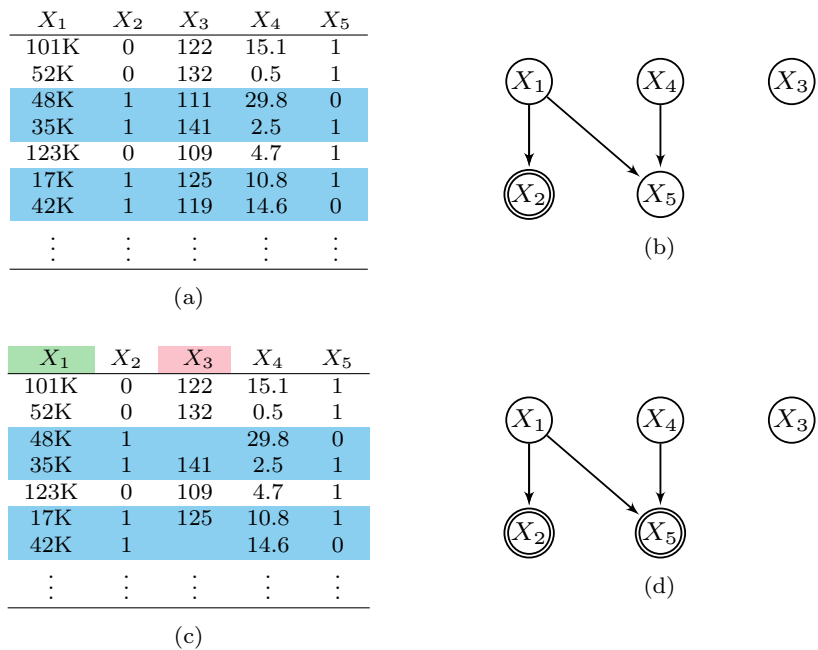
\begin{figure*}
\centering
\begin{subtable}{0.45\textwidth}
  \centering
\begin{tabular}{ c c c c c}
  $X_1$ & $X_2$ & $X_3$ & $X_4$ & $X_5$\\
	\hline
  101K & 0 & 122 & 15.1 & 1\\
  52K & 0 & 132 & 0.5 & 1\\
  \rowcolor{babyblue}
  48K & 1 & 111& 29.8 & 0\\
  \rowcolor{babyblue}
  35K & 1 & 141 & 2.5 & 1\\
  123K & 0 & 109 & 4.7 & 1\\
  \rowcolor{babyblue}
  17K & 1 & 125 & 10.8 & 1\\
  \rowcolor{babyblue}
  42K & 1 & 119 & 14.6 & 0\\ 
  $\vdots$ & $\vdots$ & $\vdots$ & $\vdots$ & $\vdots$\\
  \hline
\end{tabular}
\caption{} \label{table_SB1}
\end{subtable}
\begin{subfigure}{.25\linewidth}
\centering
\resizebox{\linewidth}{!}{
\begin{tikzpicture}[scale=1.0, shorten >=1pt,auto,node distance=2.8cm, semithick]
                    
\tikzset{vertex/.style = {shape=circle,draw,inner sep=0.4pt}}
\tikzset{edge/.style = {->,> = latex'}}
 
\node[vertex] (1) at  (0,1) {$X_1$};
\node[vertex,accepting] (2) at  (0,-0.3) {$X_2$};
\node[vertex] (4) at  (1.5,1) {$X_4$};
\node[vertex] (5) at  (1.5,-0.3) {$X_5$};
\node[vertex] (3) at  (3,1) {$X_3$};

\draw[edge] (1) to (2);
\draw[edge] (1) to (2);
\draw[edge] (4) to (5);
\draw[edge] (1) to (5);
\end{tikzpicture}
}
\caption{}  \label{fig_SB}
\end{subfigure}
\par\bigskip
\begin{subtable}{0.45\textwidth}
  \centering

\begin{tabular}{c c c c c}
  \cellcolor{celadon} $X_1$ & $X_2$ & \cellcolor{amaranth} $X_3$ & $X_4$ & $X_5$\\
	\hline
  101K & 0 & 122 & 15.1 & 1\\
  52K & 0 & 132 & 0.5 & 1\\
  \rowcolor{babyblue}
  48K & 1 &  & 29.8 & 0\\
  \rowcolor{babyblue}
  35K & 1 & 141 & 2.5 & 1\\
  123K & 0 & 109 & 4.7 & 1\\
  \rowcolor{babyblue}
  17K & 1 & 125 & 10.8 & 1\\
  \rowcolor{babyblue}
  42K & 1 &  & 14.6 & 0\\ 
  $\vdots$ & $\vdots$ & $\vdots$ & $\vdots$ & $\vdots$\\
  \hline 
\end{tabular}
\caption{} \label{table_SB2}
\end{subtable}
\begin{subfigure}{.25\linewidth}
\centering
\resizebox{\linewidth}{!}{
\begin{tikzpicture}[scale=1.0, shorten >=1pt,auto,node distance=2.8cm, semithick]
                    
\tikzset{vertex/.style = {shape=circle,draw,inner sep=0.4pt}}
\tikzset{edge/.style = {->,> = latex'}}
 
\node[vertex] (1) at  (0,1) {$X_1$};
\node[vertex,accepting] (2) at  (0,-0.3) {$X_2$};
\node[vertex] (4) at  (1.5,1) {$X_4$};
\node[vertex,accepting] (5) at  (1.5,-0.3) {$X_5$};
\node[vertex] (3) at  (3,1) {$X_3$};

\draw[edge] (1) to (2);
\draw[edge] (1) to (2);
\draw[edge] (4) to (5);
\draw[edge] (1) to (5);
\end{tikzpicture}
}
\caption{}  \label{fig_SB2}
\end{subfigure}
\caption{A dataset in (a) subjected to selection bias according to $X_2$ in the DAG in (b). In (a), we can only view the samples in blue in practice. The dataset in (c) is the same dataset in (a) but with some missing values. The variable in \textit{pink} is subject to selection bias according to $X_2$ and $X_5$ as in (d) due to the missing values, while the variable in \textit{green} is subject only to $X_2$ as in (b).}
\end{figure*}

\section{Missingness as Selection Bias on Selection Bias} \label{sec_missing}
We can informally interpret missing values as a type of ``selection bias on selection bias.'' Here, the first layer of selection bias due to $\bm{S}$ refers to the aforementioned measurement of \textit{all variables} in $\bm{O}$ in a preferential selection of the samples. Missing values in turn represent the second layer of selection bias because missing values arise due to the measurement of only a \textit{subset of the variables} in $\bm{O}$ in a preferential subset of the available samples already subject to the selection bias of $\bm{S}$. 

The missingness may more formally arise for many reasons as modeled by the factors $\bm{S}_{O_1} \supseteq \bm{S}$ for $O_1$, $\bm{S}_{O_2} \supseteq \bm{S}$ for $O_2$, and so on for all $p$ variables in $\bm{O}$. We therefore encode the binary missingness status (measured or missing) of any observable $O_i \in \bm{O}$ using the binary \textit{missingness indicators} $\{\bm{S}_{O_i} \setminus \bm{S}\} \subseteq \bm{L}$. Here, we measure the value of $\bm{O}_i$ if and only if $\bm{S}_{O_i}=1$ because we must select a sample when $\bm{S}=1$ and then measure the value of $\bm{O}_i$ when $\{\bm{S}_{O_i} \setminus \bm{S}\}=1$. The preferential selection of samples due to some $\bm{S}_{O_i}$ thus amounts to conditioning on $\bm{S}_{O_i}=1$ similar to the original selection bias case; in other words, we no longer have access to i.i.d. samples from the marginal distribution $\mathbb{P}_{O_i}$ or even $\mathbb{P}_{O_i|\bm{S}=1}$ but rather i.i.d samples from $\mathbb{P}_{O_i|\bm{S}_{O_i}=1}$. We can also consider arbitrary joint distributions $\mathbb{P}_{\bm{V}}$, where $\bm{V} \subseteq \bm{O}$. We have access to $\mathbb{P}_{\bm{V}}$ without selection bias, $\mathbb{P}_{\bm{V}|\bm{S}=1}$ with selection bias, and $\mathbb{P}_{\bm{V}|\bm{S}_{\bm{V}}=1}$ with selection bias and missing values, where $\bm{S}_{\bm{V}} = \cup_{V \in \bm{V}} \bm{S}_V$.

Consider for example the same samples in Figure \ref{table_SB1} but with missing values according to the binary variable $X_5$ in Figure \ref{table_SB2}. Now, the variable $X_3$ highlighted in \textit{pink} in Figure \ref{table_SB2} is subject to the selection variables $\bm{S}_{X_3}=\{X_2, X_5\}=1$ due to the unmeasured or missing values. On the other hand, the variable $X_1$ in \textit{green} is only subject to the original $\bm{S}_{X_1}=\{X_2\}=\bm{S}=1$ because $X_1$ contains no missing values. We can therefore represent these two situations graphically as in Figure \ref{fig_SB} for $\bm{S}_{X_1}$ and Figure \ref{fig_SB2} for $\bm{S}_{X_3}$. Notice that Figure \ref{fig_SB2} has an extra double sided vertex $X_5$ representing the extra conditioning.

Returning to our medical example, $X_4$ may correspond to the number of miles from the hospital to a patient's house. Individuals with low income may have a hard time commuting to the hospital, if they live far away. The physician therefore may not be able to measure SBP $X_3$ in low income patients who live far from the hospital. We thus no longer even have access to i.i.d. samples from $\mathbb{P}_{X_1 X_3|\bm{S}=1}$ but instead have access to i.i.d. samples from a set of conditional distributions $\{\mathbb{P}_{X_1|\bm{S}_{X_1}=1}, \mathbb{P}_{X_3|\bm{S}_{X_3}=1}, \mathbb{P}_{X_1X_3|\bm{S}_{X_1X_3}=1} \}$.

More generally, we do not have access to i.i.d. samples from $\mathbb{P}_{\bm{O}|\bm{S}=1}$ when missing values exist. Instead, we have access to i.i.d. samples from a set of conditional distributions $\{\mathbb{P}_{\bm{V}|\bm{S}_{\bm{V}}=1}, \forall \bm{V} \subseteq \bm{O} \}$. In this sense, we must deal with heterogeneous selection bias induced by $\bm{S}_{O_1}, \dots,$ $\bm{S}_{O_p}$ as opposed to homogeneous selection bias induced by just $\bm{S}$.

\section{An Assumption on the Missingness Mechanisms} \label{sec_ass}
Let $\bm{S}^u$ denote the set of $q \leq p$ unique members of $\{\bm{S}_{O_1}, \dots,$ $\bm{S}_{O_p}\}$. Note that we have so far imposed no restrictions on the causal relations involving $\bm{S}$ or any member of $\bm{S}^u$. From here on, we will continue to impose no restrictions on the causal relations involving $\bm{S}$, but we will impose restrictions on the causal relations involving the elements in the set $\bm{M} = \{\bm{S}_1^u \setminus \bm{S}, \dots, \bm{S}_q^u \setminus \bm{S} \}$. 

Recall that each $M_i \in \bm{M}$ corresponds to a set of missingness indicators, but we can colloquially call each $M_i \in \bm{M}$ a ``missingness mechanism'' because we obtain missing values for some subset of variables $\bm{V} \subseteq \bm{O}$ when (at least) one member of $M_i$ takes on a value of 0. Here, a missingness mechanism often corresponds to a practical issue. For example, we may have three variables in $M_i$ corresponding to three instruments required to perform a measurement. We have $M_i = 1$ when three instruments can perform the measurement but one variable in $M_i$ equals zero when one of the three instruments fails. We may similarly have $M_3 = 1$ when a subject can commute to the hospital and $M_3 = 0$ when the subject cannot commute to the hospital as in the running medical example; thus $M_3 = X_5$ in this case.

Now let $\mathcal{I}_i = \{\{1, \dots, q\} \setminus i \}$ and consider the following assumption:
\begin{assumption} \label{assump1}
Each $M_i \in \bm{M}$ does not contain an ancestor of any variable in $\bm{O} \cup \bm{S}$ or $\cup_{j \in \mathcal{I}_i} M_j $.
\end{assumption}

The above assumption appears technical at first\\ glance, but we can justify it using an inductive argument that reads as follows.\footnote{Recall that justifying MAR or MCAR in real datasets also requires inductive arguments.} First note that we have no missing values if and only if all variables in the sets in $\bm{M}$ take on a value of one. Suppose then that we have a missing value but then manually set all variables in the sets in $\bm{M}$ to one in order to observe the value. Then we do not expect the mere act of observing a value, or equivalently intervening on the sets in $\bm{M}$, to induce changes in (or causally affect) (1) the values of the observable variables $\bm{O}$ in a dataset or (2) the set of available samples determined by $\bm{S}$. In other words, none of the variables in any set in $\bm{M}$ should be an ancestor of any of the variables in $\bm{O}\cup \bm{S}$.

Assumption \ref{assump1} however imposes the extra condition that no missingness mechanism $M_i \in \bm{M}$ contains an ancestor of any variable in $\cup_{j \in \mathcal{I}_i} M_j$. We find the extra assumption reasonable, if an attempt is made to measure each observable variable in $\bm{O}$ for each sample \textit{regardless} of the missingness status of any other variable in $\bm{O}$. This means that the missingness statuses cannot causally affect each other. Now recall that the missingness status of any variable is determined by the variable's missing mechanism. Hence, we can equivalently state that the missingness mechanisms cannot causally affect each other (i.e., each $M_i \in \bm{M}$ does not contain an ancestor of any variables in $\cup_{j \in \mathcal{I}_i} M_j$). For example, we assume that a failure of any one of three instruments does not cause an investigator to potentially forgo the measurement of other variables which do not require the instruments but say rather cost a lot of money. Instead, the investigator attempts to measure the other variables regardless of whether or not an instrument fails. The instrument failures are thus not causes of the inability to pay or any other missingness mechanism. We conclude inductively that Assumption \ref{assump1} is justified with ``comprehensively measured observational data,'' where an attempt is made to measure each observable variable in $\bm{O}$ for each sample regardless of the missingness status of any other variable in $\bm{O}$.

\section{Graph Theory} \label{sec_theory}
We can justify test-wise deletion, if we can utilize Assumption \ref{assump1}. We will consider the set of selection variables $\bm{S}_l = \cup_{i=1}^p \bm{S}_{O_i}$. Notice that we obtain $\bm{S}_l=1$, when we perform list-wise deletion on the dataset. Also let $\bm{S}_{O_iO_j\bm{W}}$ refer to the selection set induced by the complete samples among the variables $O_i, O_j$ and $\bm{W} \subseteq \bm{O} \setminus \{O_i, O_j\}$ alone by setting $\bm{V}=\{O_i, O_j, \bm{W} \}$ in $\bm{S}_{\bm{V}}$; in other words, $\bm{S}_{O_iO_j\bm{W}}$ corresponds to the selection variables obtained after performing \textit{test-wise deletion}, or list-wise deletion only among the variables $\{O_i, O_j, \bm{W} \}$.

The following important lemma now forms the basis of our arguments:
\begin{lemma} \label{lem_CD}
Consider Assumption \ref{assump1}. If $O_i \not \ci_d O_j | (\bm{W},$ $\bm{S}_{O_iO_j\bm{W}})$ with $\bm{W} \subseteq \bm{O} \setminus \{O_i, O_j\}$, then $O_i \not \ci_d O_j | (\bm{W},$ $\bm{S}_l)$.
\end{lemma}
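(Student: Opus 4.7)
The plan is to show directly that any path $\pi$ from $O_i$ to $O_j$ that $d$-connects given $(\bm{W}, \bm{S}_{O_iO_j\bm{W}})$ also $d$-connects given the larger set $(\bm{W}, \bm{S}_l)$. Because $\bm{S}_l \supseteq \bm{S}_{O_iO_j\bm{W}}$, every collider on $\pi$ that already had a descendant in the smaller conditioning set still has one in the larger set, so the collider condition transfers for free. The whole task therefore reduces to ruling out the possibility that some non-collider on $\pi$ ends up in $\bm{D} := \bm{S}_l \setminus \bm{S}_{O_iO_j\bm{W}}$ and thereby blocks $\pi$ --- which is exactly where Assumption \ref{assump1} will be used.

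I would argue by contradiction: suppose a non-collider $X$ on $\pi$ lies in $\bm{D}$. Unpacking the definitions of $\bm{S}_l$ and $\bm{S}_{O_iO_j\bm{W}}$, such an $X$ must belong to some $M_{k'} \in \bm{M}$ whose underlying mechanism $\bm{S}^u_{k'}$ is not the one for any observable in $\{O_i, O_j\} \cup \bm{W}$. A first observation from Assumption \ref{assump1}, using that every variable is its own ancestor, is that distinct $M$-sets are pairwise disjoint; together with the preceding sentence this forces $M_{k'}$ to be disjoint from every $M$-set that appears inside $\bm{S}_{O_iO_j\bm{W}}$. Next, because $X$ is a non-collider in the underlying DAG, at least one edge at $X$ is tailed at $X$; walking out along that edge, the non-collider requirement at each subsequent vertex forces the sub-path to stay directed, since an incoming arrow plus non-colliderness leaves only an outgoing arrow on the other side. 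This directed walk cannot reach the endpoint of $\pi$ (an observable in $\bm{O}$), because Assumption \ref{assump1} forbids any element of $M_{k'}$ from being an ancestor of $\bm{O}$. Hence the walk must hit a first collider $Z$ on $\pi$, and the directed sub-path $X \rightarrow \cdots \rightarrow Z$ exhibits $Z$ as a descendant of $X$.

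Finally, activity of $\pi$ at the collider $Z$ supplies a descendant $D$ of $Z$, hence of $X$, that lies in $(\bm{W}, \bm{S}_{O_iO_j\bm{W}})$. By Assumption \ref{assump1}, every descendant of an element of $M_{k'}$ lies in $M_{k'} \cup (\bm{L} \setminus \bm{M})$, whereas $(\bm{W}, \bm{S}_{O_iO_j\bm{W}}) \subseteq \bm{O} \cup \bm{S} \cup \bigcup_{O \in \{O_i, O_j\} \cup \bm{W}} M_{k(O)}$; the two sets are disjoint by the $M$-set bookkeeping already established together with $\bm{M} \cap (\bm{O} \cup \bm{S}) = \emptyset$, which delivers the contradiction. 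I expect the main obstacle to be precisely this set-algebraic bookkeeping --- deducing pairwise disjointness of distinct $M$-sets from Assumption \ref{assump1} and confirming that descendants of any $M_{k'}$-element really are trapped inside $M_{k'} \cup (\bm{L} \setminus \bm{M})$ --- since once those facts are in place, the graph-theoretic step of forcing the collider $Z$ via a directed sub-path is essentially automatic from the DAG structure.
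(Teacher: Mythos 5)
Your proposal is correct and follows essentially the same route as the paper's proof: show the active path $\pi$ stays active under the larger conditioning set by ruling out, via contradiction with Assumption \ref{assump1}, that any variable of $\bm{S}_l \setminus \bm{S}_{O_iO_j\bm{W}}$ can be a non-collider on $\pi$, using the fact that such a non-collider would have to be an ancestor of an endpoint or of the original conditioning set. The only differences are expository --- you make explicit the pairwise disjointness of the $M$-sets and the directed-walk derivation of the ancestor property, both of which the paper uses implicitly.
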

\begin{proof}
If $O_i \not \ci_d O_j | (\bm{W}, \bm{S}_{O_iO_j\bm{W}})$, then $\{\bm{W}, \bm{S}_{O_iO_j\bm{W}} \}$ must contain the descendants of all colliders and no non-colliders on a path $\pi$ between $O_i$ and $O_j$. In other words, $\pi$ is active given $\{\bm{W}, \bm{S}_{O_iO_j\bm{W}}\}$. Note that the conclusion follows trivially if $\bm{S}_l = \bm{S}_{O_iO_j\bm{W}}$. Suppose then that we have $\bm{S}_l \supset \bm{S}_{O_iO_j\bm{W}}$. Let $\bm{T}_{O_iO_j\bm{W}} = \bm{S}_l \setminus \bm{S}_{O_iO_j\bm{W}}$. It suffices to show that $\bm{T}_{O_iO_j\bm{W}}$ cannot contain a non-collider on $\pi$. Suppose for a contradiction that there exists a variable $Z \in \bm{T}_{O_iO_j\bm{W}}$ which is a non-collider on $\pi$. Then $Z$ must be an ancestor of $O_i$, $O_j$ or $\bm{S}_{O_iO_j\bm{W}}$. Note that we have $Z \in \{ \bm{S}_l \setminus \bm{S} \}$, so $Z$ must be a member of at least one element in $\bm{M}$. As a result, $Z$ cannot be an ancestor of any variable in $\bm{O}$ by Assumption \ref{assump1}; thus $Z$ cannot be an ancestor of $O_i$ or $O_j$. So $Z$ can only be an ancestor of $\bm{S}_{O_iO_j\bm{W}}$. The variable $Z$ however cannot be an ancestor of $\bm{S}$ also by Assumption \ref{assump1}, so $Z$ can only be an ancestor of $\bm{S}_{O_iO_j\bm{W}}\setminus \bm{S}$. Next, notice that $Z$ is not a member of $\bm{S}_{O_i}\setminus \bm{S}$, $\bm{S}_{O_j}\setminus \bm{S}$ or $\bm{S}_{W}\setminus \bm{S}$ for any $W \in \bm{W}$ because we have $Z \in \bm{T}_{O_iO_j\bm{W}}$. Thus no element in $\bm{M}$ containing $Z$ can also contain an ancestor of $\bm{S}_{O_i}\setminus \bm{S}$, $\bm{S}_{O_j}\setminus \bm{S}$ or $\bm{S}_{W}\setminus \bm{S}$ for any $W \in \bm{W}$ by Assumption \ref{assump1}; hence no element in $\bm{M}$ containing $Z$ can also contain an ancestor of $\bm{S}_{O_iO_j\bm{W}}\setminus \bm{S}$, so $Z$ cannot be an ancestor of $\bm{S}_{O_iO_j\bm{W}}\setminus \bm{S}$. We conclude by contradiction that $\bm{T}_{O_iO_j\bm{W}}$ cannot contain a non-collider on $\pi$. Hence, we have $O_i \not \ci_d O_j | (\bm{W},$ $\bm{S}_l)$ via the active path $\pi$. \qed
\end{proof}
The above lemma leads to important conclusions regarding the design of a CCD algorithm. We begin to justify v-structure discovery and an orientation rule (R1) using test-wise deletion with another lemma:
\begin{lemma} \label{lem_comb}
Consider Assumption \ref{assump1}. If we have $O_i \ci_d O_j | (\bm{W}, \bm{S}_{O_iO_j\bm{W}})$ with $\bm{W} \subseteq \bm{O} \setminus \{O_i, O_j\}$ minimal and $O_i \ci_d O_j |$ $(\bm{W}, \bm{S}_l)$, then $O_i \ci_d O_j | (\bm{W}, \bm{S}_l)$ with $\bm{W}$ minimal.
\end{lemma}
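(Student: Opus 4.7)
The plan is to argue by contradiction and invoke the contrapositive of Lemma \ref{lem_CD}. Stated contrapositively, Lemma \ref{lem_CD} says that, under Assumption \ref{assump1}, $O_i \ci_d O_j | (\bm{U}, \bm{S}_l)$ implies $O_i \ci_d O_j | (\bm{U}, \bm{S}_{O_iO_j\bm{U}})$ for any $\bm{U} \subseteq \bm{O}\setminus\{O_i,O_j\}$; the proof of Lemma \ref{lem_CD} is indifferent to which particular $\bm{U}$ we choose, so this rephrasing is free.

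I would then suppose, for contradiction, that $\bm{W}$ is not minimal in $O_i \ci_d O_j | (\bm{W}, \bm{S}_l)$, so some $\bm{W}' \subsetneq \bm{W}$ still gives $O_i \ci_d O_j | (\bm{W}', \bm{S}_l)$. Applying the contrapositive of Lemma \ref{lem_CD} with $\bm{U} = \bm{W}'$ immediately yields $O_i \ci_d O_j | (\bm{W}', \bm{S}_{O_iO_j\bm{W}'})$. But the hypothesis says precisely that $\bm{W}$ is a minimal subset of $\bm{O}\setminus\{O_i,O_j\}$ d-separating $O_i$ from $O_j$ once its attendant test-wise selection set is adjoined, so no proper subset $\bm{W}' \subsetneq \bm{W}$ can satisfy $O_i \ci_d O_j | (\bm{W}', \bm{S}_{O_iO_j\bm{W}'})$. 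This is the sought contradiction, and $\bm{W}$ is minimal for the list-wise statement as well.

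The only real subtlety, and thus the ``hard part,'' is fixing the right reading of ``minimal.'' The intended, operationally natural reading is that when one shrinks the conditioning set among observables from $\bm{W}$ to $\bm{W}'$, the accompanying selection set shrinks from $\bm{S}_{O_iO_j\bm{W}}$ to $\bm{S}_{O_iO_j\bm{W}'}$, because that is exactly the selection set a CI test under test-wise deletion would induce. Under this reading the lemma collapses to a one-line application of Lemma \ref{lem_CD}'s contrapositive, and no further use of Assumption \ref{assump1} or new graph-theoretic reasoning is required; a ``fixed selection set'' reading of minimality would force one to revisit the collider/non-collider analysis in the style of Lemma \ref{lem_CD} itself, but that does not appear to be what is meant here.
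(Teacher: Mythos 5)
Your argument is, at bottom, the mirror image of the paper's: both proofs reduce to applying Lemma \ref{lem_CD} to each proper subset of $\bm{W}$ (you via its contrapositive and a reductio, the paper directly), and your instantiation of Lemma \ref{lem_CD} with $\bm{U}=\bm{W}'$ is legitimate since that lemma is stated for arbitrary conditioning subsets of $\bm{O}\setminus\{O_i,O_j\}$. You also correctly isolated the one real issue --- the reading of ``minimal'' --- but you bet on the opposite reading from the paper. The paper's proof unpacks the hypothesis as: for every strict subset $\bm{A}\subsetneq\bm{W}$, $O_i \not\ci_d O_j | (\bm{A},\bm{S}_{O_iO_j\bm{W}})$, i.e.\ the selection set is held \emph{fixed} at $\bm{S}_{O_iO_j\bm{W}}$ while only the observable part of the conditioning set shrinks (the literal restriction of the paper's definition of a minimal separating set to the $\bm{W}$-part). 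It then pushes each such d-connection up to $\bm{S}_l$ via Lemma \ref{lem_CD}; strictly this invokes the lemma with a mismatched pair ($\bm{A}$ conditioned, $\bm{S}_{O_iO_j\bm{W}}$ selected), which the lemma as stated does not literally cover, though its proof extends verbatim --- so the ``revisiting'' you feared is minimal, not a full collider analysis.

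Under the paper's reading your final step has a genuine gap: from $O_i \ci_d O_j | (\bm{W}',\bm{S}_l)$ you obtain $O_i \ci_d O_j | (\bm{W}',\bm{S}_{O_iO_j\bm{W}'})$, but this does not contradict $O_i \not\ci_d O_j | (\bm{W}',\bm{S}_{O_iO_j\bm{W}})$. The extra indicators in $\bm{S}_{O_iO_j\bm{W}}\setminus\bm{S}_{O_iO_j\bm{W}'}$ (missingness indicators attached to variables in $\bm{W}\setminus\bm{W}'$) can be descendants of colliders on paths between $O_i$ and $O_j$ --- Assumption \ref{assump1} forbids them from being \emph{ancestors} of $\bm{O}\cup\bm{S}$ or of other missingness mechanisms, but not from being \emph{caused by} such variables --- so conditioning on them can open paths that are closed given $\bm{S}_{O_iO_j\bm{W}'}$ alone. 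Consequently your ``shrinking'' reading is a strictly stronger hypothesis than the paper's ``fixed'' reading, and your proof establishes a weaker lemma than the one the paper proves. To match the paper, run the same one-line argument in the direct rather than contrapositive direction: take the d-connection given $(\bm{A},\bm{S}_{O_iO_j\bm{W}})$ supplied by fixed-reading minimality and lift it to $\bm{S}_l$ by the argument in Lemma \ref{lem_CD}'s proof.
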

\begin{proof}
If we have $O_i \ci_d O_j | (\bm{W}, \bm{S}_{O_iO_j\bm{W}})$ with $\bm{W} \subseteq \bm{O} \setminus \{O_i, O_j\}$ minimal, then $O_i \not \ci_d O_j | (\bm{A}, \bm{S}_{O_iO_j\bm{W}})$, where $\bm{A}$ denotes an arbitrary strict subset of $\bm{W}$. Now $O_i$ $\not \ci_d O_j | (\bm{A}, \bm{S}_{O_iO_j\bm{W}})$ implies $O_i \not \ci_d O_j | (\bm{A}, \bm{S}_l)$ by Lemma \ref{lem_CD}. The conclusion follows because we chose $\bm{A}$ arbitrarily and assumed $O_i \ci_d O_j |(\bm{W}, \bm{S}_l)$. \qed
\end{proof}
Note that FCI and RFCI require many calls to a CI oracle precisely because they search for minimal separating sets. We can therefore take advantage of Lemma \ref{lem_comb} by searching for minimal separating sets with test-wise deletion and then only confirming the separating sets with list-wise deletion (rather than directly searching for minimal separating sets with list-wise deletion). This greatly reduces the number of CI tests performed with list-wise deletion.

We can now directly justify some desired conclusions. Let us examine the most difficult arguments in detail. We have the following conclusion for v-structure discovery and R1:
\begin{proposition} \label{thm_v_R1}
Consider Assumption \ref{assump1}. Suppose $O_i$\\ $\ci_d O_j | (\bm{W}, \bm{S}_{O_iO_j\bm{W}})$ with $\bm{W} \subseteq \bm{O} \setminus \{O_i, O_j\}$ minimal and $O_i \ci_d O_j | (\bm{W}, \bm{S}_l)$. Further assume $O_i   \not \ci_d O_k | (\bm{W}, \bm{S}_{O_iO_j\bm{W}})$ and $O_j   \not \ci_d O_k | (\bm{W}, $ $\bm{S}_{O_iO_j\bm{W}})$. We have $O_k \in \bm{W}$ if and only if $O_k \in \bm{An}(\{O_i, O_j\} \cup \bm{S}_l)$.
\end{proposition}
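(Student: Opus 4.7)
The plan is to first reduce the statement to the list-wise deletion setting. Lemma \ref{lem_comb} upgrades the test-wise minimality of $\bm{W}$ to minimality with respect to $\bm{S}_l$, and Lemma \ref{lem_CD} promotes the two test-wise d-connections $O_i \not \ci_d O_k | (\bm{W}, \bm{S}_{O_iO_j\bm{W}})$ and $O_j \not \ci_d O_k | (\bm{W}, \bm{S}_{O_iO_j\bm{W}})$ to $O_i \not \ci_d O_k | (\bm{W}, \bm{S}_l)$ and $O_j \not \ci_d O_k | (\bm{W}, \bm{S}_l)$. From this point on, Assumption \ref{assump1} is no longer needed: the claim reduces to a pure d-separation statement about the underlying DAG with selection set $\bm{S}_l$, so I would conduct the rest of the proof entirely in that setting.

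For the forward direction, suppose $O_k \in \bm{W}$. By minimality there is an active path $\pi$ from $O_i$ to $O_j$ given $(\bm{W} \setminus \{O_k\}, \bm{S}_l)$ that becomes inactive once $O_k$ is added to the conditioning set; since conditioning can only block a path at a non-collider, $O_k$ is a non-collider on $\pi$. Following an edge that leaves $O_k$ along $\pi$, the trace terminates either at an endpoint (so $O_k \in \bm{An}(\{O_i, O_j\})$) or at a collider whose descendant lies in $(\bm{W} \setminus \{O_k\}, \bm{S}_l)$ by activeness of $\pi$ (so $O_k \in \bm{An}(\bm{W} \cup \bm{S}_l)$). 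The same trace applied to every $W' \in \bm{W}$ yields the classical ``every element of a minimal separator is ancestral'' conclusion $\bm{W} \subseteq \bm{An}(\{O_i, O_j\} \cup \bm{S}_l)$, so $\bm{An}(\bm{W}) \subseteq \bm{An}(\{O_i, O_j\} \cup \bm{S}_l)$ and hence $O_k \in \bm{An}(\{O_i, O_j\} \cup \bm{S}_l)$.

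For the reverse direction, assume $O_k \in \bm{An}(\{O_i, O_j\} \cup \bm{S}_l)$ and suppose for contradiction $O_k \notin \bm{W}$. Pick active paths $\pi_1$ from $O_i$ to $O_k$ and $\pi_2$ from $O_j$ to $O_k$ given $(\bm{W}, \bm{S}_l)$ and consider the walk $\pi_1 \cdot \pi_2^{-1}$. If $O_k$ is a non-collider at the junction (at least one of $\pi_1, \pi_2$ meets $O_k$ with a tail), then since $O_k \notin \bm{W} \cup \bm{S}_l$ the walk is active at $O_k$, and the standard extraction of a simple active path from an active walk contradicts $O_i \ci_d O_j | (\bm{W}, \bm{S}_l)$. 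If $O_k$ is a collider at the junction and has any descendant in $\bm{W} \cup \bm{S}_l$, the collider is opened and the same argument applies. The only remaining subcase has $O_k$ a collider at the junction, $O_k \in \bm{An}(\{O_i, O_j\})$ (say $O_k \in \bm{An}(O_j)$), and no descendant of $O_k$ in $\bm{W} \cup \bm{S}_l$. Then any directed path $\delta : O_k \to \cdots \to O_j$ has all intermediate vertices (being descendants of $O_k$) outside $\bm{W} \cup \bm{S}_l$, so splicing $\pi_1$ with $\delta$ yields an active walk from $O_i$ to $O_j$ given $(\bm{W}, \bm{S}_l)$ ($O_k$ is a chain-node at the splice, $\pi_1$ is already active, and each new intermediate is an unconditioned chain-node), again contradicting $O_i \ci_d O_j | (\bm{W}, \bm{S}_l)$ after extracting a simple path.

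The main obstacle is this collider-at-junction subcase: a naive concatenation $\pi_1 \cdot \pi_2^{-1}$ can be blocked by an element of $\bm{W}$ lying on a directed descent of $O_k$. The dichotomy above is what makes the argument go through cleanly: if such a conditioning vertex exists it is a descendant of $O_k$ that opens the collider, and if it does not then the directed descent from $O_k$ is automatically disjoint from $\bm{W} \cup \bm{S}_l$ and can be substituted for $\pi_2^{-1}$ without any further repair.
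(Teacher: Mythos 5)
Your proof is correct, and its first half is exactly the paper's proof: use Lemma \ref{lem_comb} to upgrade test-wise minimality of $\bm{W}$ to minimality with respect to $\bm{S}_l$, and Lemma \ref{lem_CD} to promote the two d-connections to the $\bm{S}_l$ setting, after which Assumption \ref{assump1} plays no further role. Where you diverge is in the second half: the paper stops there and simply invokes Lemma 3.1 of Colombo et al.\ (2012), which is precisely the pure d-separation statement you go on to prove from scratch. Your self-contained argument is sound --- the forward direction is the classical ``elements of a minimal separator are ancestors of the endpoints or the selection set'' claim (with the usual acyclicity-based recursion to discharge the $\bm{An}(\bm{W})$ case), and your backward direction correctly handles the one delicate subcase, namely a collider at the junction of $\pi_1$ and $\pi_2^{-1}$ with no descendant in $\bm{W}\cup\bm{S}_l$, by substituting a directed descent from $O_k$ to an endpoint and noting that its intermediate vertices are automatically unconditioned. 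What the paper's route buys is brevity and a clean modularization onto a published lemma; what your route buys is a fully self-contained proof that makes explicit exactly which graph-theoretic facts are being used and why the concatenation-of-active-paths argument does not break down at the junction. Both are valid; yours is the more elementary and more transparent of the two.
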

\begin{proof}
If $O_i \ci_d O_j | (\bm{W}, \bm{S}_{O_iO_j\bm{W}})$ with $\bm{W}$ minimal and $O_i \ci_d O_j | (\bm{W}, \bm{S}_l)$, then $O_i \ci_d O_j | (\bm{W}, \bm{S}_l)$ with $\bm{W}$ minimal by Lemma \ref{lem_comb}. By Lemma \ref{lem_CD}, we know that $O_i   \not \ci_d O_k | (\bm{W}, \bm{S}_{O_iO_j\bm{W}})$ and $O_j   \not \ci_d O_k | (\bm{W}, \bm{S}_{O_iO_j\bm{W}})$ imply $O_i   \not \ci_d O_k | (\bm{W}, \bm{S}_l)$ and $O_j   \not \ci_d O_k | (\bm{W}, \bm{S}_l)$, respectively. We may now invoke Lemma 3.1 of \cite{Colombo12} with $\bm{S}_l$.\qed
\end{proof}
We can also justify the discriminating path rule (R4) with a similar argument:
\begin{proposition} \label{thm_R4}
Consider Assumption \ref{assump1}. Let $\pi_{ik} =$\\ $\{ O_i, \dots, O_l, O_j, O_k \}$ be a sequence of at least four vertices which satisfy the following:
\begin{enumerate}
\item $O_i \ci_d O_k | (\bm{W}, \bm{S}_l)$ with  $\bm{W} \subseteq \bm{O} \setminus \{O_i, O_j\}$,
\item Any two successive vertices $O_h$ and $O_{h+1}$ on $\pi_{ik}$ are d-connected given: 
\begin{equation} \nonumber
(\bm{Y} \setminus \{O_h, O_{h+1} \} ) \cup\bm{S}_{O_{h}O_{h+1}(\bm{Y} \setminus \{O_h, O_{h+1} \} )}
\end{equation} 
for all $\bm{Y} \subseteq \bm{W}$,
\item All vertices $O_h$ between $O_i$ and $O_j$ (not including $O_i$ and $O_j$) satisfy $O_h \in \bm{An}(O_k)$ and $O_h \not \in$\\ $\bm{An}(\{ O_{h-1}, O_{h+1} \}\cup \bm{S}_l)$, where $O_{h-1}$ and $O_{h+1}$ denote the vertices adjacent to $O_h$ on $\pi_{ik}$.
\end{enumerate}
Then, if $O_j \in \bm{W}$, then $O_j \in \bm{An}(O_k \cup \bm{S}_l)$ and $O_k \in \bm{An}(O_j \cup \bm{S}_l)$. On the other hand, if $O_j \not \in \bm{W}$, then $O_j \not \in \bm{An}(\{O_l, O_k\} \cup \bm{S}_l)$ and $O_k \not \in \bm{An}(O_j \cup \bm{S}_l)$.
\end{proposition}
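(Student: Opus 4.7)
The plan is to mirror the template used for Proposition \ref{thm_v_R1}: rewrite every hypothesis that still mentions a test-wise selection set $\bm{S}_{O_h O_{h+1}(\cdot)}$ so that the conditioning set involves only the list-wise selection set $\bm{S}_l$, and then invoke the existing discriminating path lemma for RFCI under selection bias (Lemma 3.2 of \cite{Colombo12}) with $\bm{S}$ there replaced by $\bm{S}_l$. Conditions 1 and 3 are already stated in terms of $\bm{S}_l$, so only condition 2 needs translating.

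For the translation, I would fix an arbitrary $\bm{Y} \subseteq \bm{W}$ and an arbitrary successive pair $O_h, O_{h+1}$ on $\pi_{ik}$, and apply Lemma \ref{lem_CD} with the role of its $\{O_i, O_j\}$ played by $\{O_h, O_{h+1}\}$ and the role of its $\bm{W}$ played by $\bm{Y} \setminus \{O_h, O_{h+1}\}$. This promotes d-connection of $O_h$ and $O_{h+1}$ given
\[
(\bm{Y} \setminus \{O_h, O_{h+1}\}) \cup \bm{S}_{O_h O_{h+1}(\bm{Y} \setminus \{O_h, O_{h+1}\})}
\]
to d-connection given $(\bm{Y} \setminus \{O_h, O_{h+1}\}) \cup \bm{S}_l$. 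Quantifying over all pairs $(O_h,O_{h+1})$ and all $\bm{Y} \subseteq \bm{W}$ yields exactly the connectedness hypothesis of the list-wise version of the discriminating path lemma.

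With condition 2 so rewritten, all three hypotheses of Lemma 3.2 of \cite{Colombo12} hold for $\pi_{ik}$ with selection set $\bm{S}_l$. That lemma's conclusion then delivers the two desired cases verbatim: in the branch $O_j \in \bm{W}$ we obtain $O_j \in \bm{An}(O_k \cup \bm{S}_l)$ and $O_k \in \bm{An}(O_j \cup \bm{S}_l)$, and in the branch $O_j \not\in \bm{W}$ we obtain $O_j \not\in \bm{An}(\{O_l, O_k\} \cup \bm{S}_l)$ together with $O_k \not\in \bm{An}(O_j \cup \bm{S}_l)$.

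The main obstacle I anticipate is bookkeeping rather than substance. Only the directional implication in Lemma \ref{lem_CD} is being used (no minimality is being transferred, so Lemma \ref{lem_comb} is not required here), and each invocation is permitted because Assumption \ref{assump1} is in force globally. The subtle point is that the set $\bm{S}_{O_h O_{h+1}(\bm{Y} \setminus \{O_h, O_{h+1}\})}$ varies with both the pair and the subset; however, because the translation is applied pointwise for every such triple and then assembled into the quantified statement that Colombo et al.'s lemma expects, no uniform relation among these selection sets is needed. Once that check is made explicit, the rest is a direct appeal to the existing result.
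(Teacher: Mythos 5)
Your proposal matches the paper's own proof: the paper likewise applies Lemma \ref{lem_CD} to promote each d-connection in condition 2 from $\bm{S}_{O_h O_{h+1}(\bm{Y}\setminus\{O_h,O_{h+1}\})}$ to $\bm{S}_l$ and then invokes Lemma 3.2 of \cite{Colombo12} with $\bm{S}_l$. Your additional observations---that conditions 1 and 3 need no translation and that Lemma \ref{lem_comb} is not required---are correct and simply make explicit what the paper leaves implicit.
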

\begin{proof}
Apply Lemma \ref{lem_CD} to conclude that the required d-connections with $\bm{S}_{O_{h}O_{h+1}(\bm{Y} \setminus \{O_h, O_{h+1} \} )}$ also hold with $\bm{S}_l$. Subsequently invoke Lemma 3.2 of \cite{Colombo12} with $\bm{S}_l$. \qed
\end{proof}
\noindent We can prove the soundness of the remaining orientation rules in \cite{Zhang08} using a similar strategy. Many desired conclusions therefore easily follow from Lemmas \ref{lem_CD} and \ref{lem_comb}.

\section{Algorithms with Test-Wise Deletion} \label{sec_alg}

We introduced some direct proofs in Propositions \ref{thm_v_R1} and \ref{thm_R4} of the previous section which capitalize on Lemmas \ref{lem_CD} and \ref{lem_comb}. We can however actually prove the \textit{full} soundness and completeness of FCI in one sweep by designing a CI oracle wrapper.

We introduce the CI oracle wrapper in Algorithm \ref{alg_wrapper}. Wlog, we assume that the CI oracle outputs 0 when conditional dependence holds and 1 otherwise. The wrapper works by first querying the CI oracle with $\bm{S}_{O_i O_j \bm{W}}$ in line \ref{alg:first_query}. If the CI oracle outputs 1, then the wrapper also checks whether the CI oracle outputs 1 with $\bm{S}_l$ in line \ref{alg:second_query}. If so, the wrapper outputs 1 and otherwise outputs 0 due to line \ref{alg:combine}. Hence, the wrapper claims conditional independence only when both the CI oracle with $\bm{S}_{O_i O_j \bm{W}}$ and the CI oracle with $\bm{S}_l$ output 1 thus implementing Lemma \ref{lem_comb}. On the other hand, if the CI oracle with $\bm{S}_{O_i O_j \bm{W}}$ outputs 0, then the wrapper immediately outputs 0 thus implementing Lemma \ref{lem_CD}. Notice that the finite sample version of Algorithm \ref{alg_wrapper} follows immediately by replacing line \ref{line_alpha} with an $\alpha$ level cutoff.

{\linespread{0.8}\selectfont
\begin{algorithm}[] \label{fci_vstruc}
 \KwData{$O_i$, $O_j$, $\bm{W}$}
 \KwResult{$p$}
 \BlankLine
 
 $p \leftarrow $ Ask CI oracle whether $O_i \ci O_j | (\bm{W}, \bm{S}_{O_i O_j \bm{W}})$ \label{alg:first_query}\\
 \If{$p$ is $1$ \label{line_alpha} \label{alg:check_query}}{
 	 $q \leftarrow $ Ask CI oracle whether $O_i \ci O_j | (\bm{W}, \bm{S}_l)$ \label{alg:second_query} \\
     $p \leftarrow \min(p,q)$ \label{alg:combine}
 }
 
 \BlankLine

 \caption{CI oracle wrapper} \label{alg_wrapper}
\end{algorithm}
}

We now make the following claim:
\begin{theorem} \label{thm1}
Consider Assumption \ref{assump1}. Further assume d-separation faithfulness. Then FCI using Algorithm \ref{alg_wrapper} outputs the same graph as FCI using a CI oracle with $\bm{S}_l$. The same result holds for RFCI.
\end{theorem}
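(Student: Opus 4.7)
My plan is to show that Algorithm~\ref{alg_wrapper}, viewed as a black-box CI oracle, returns on every query the same answer as an oracle that conditions directly on $\bm{S}_l$. Once this per-query equivalence is in hand, the theorem is immediate: FCI and RFCI are deterministic procedures whose execution trace is completely determined by the sequence of oracle answers they receive, so two oracles that coincide on every query must drive either algorithm to the same output graph.

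To establish the equivalence, I would fix an arbitrary query $(O_i, O_j, \bm{W})$ and set $p_t = 1$ if $O_i \ci_d O_j \mid (\bm{W}, \bm{S}_{O_iO_j\bm{W}})$ holds and $p_t = 0$ otherwise, and define $p_l$ analogously using $\bm{S}_l$. Lemma~\ref{lem_CD} gives the one-line implication $p_t = 0 \Rightarrow p_l = 0$, equivalently $p_l = 1 \Rightarrow p_t = 1$. I then trace the control flow of Algorithm~\ref{alg_wrapper}: after line~\ref{alg:first_query} the variable $p$ equals $p_t$; the inner branch (lines~\ref{alg:second_query}--\ref{alg:combine}) only executes when $p_t = 1$, and in that case replaces $p$ with $\min(p_t, p_l) = p_l$. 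Case analysis: if $p_l = 1$, then by the contrapositive of Lemma~\ref{lem_CD} we have $p_t = 1$, the inner branch fires, and the output is $\min(1,1) = 1 = p_l$; if $p_l = 0$, then either $p_t = 0$ and the wrapper short-circuits to $0 = p_l$, or $p_t = 1$ and the inner branch yields $\min(1,0) = 0 = p_l$. In every branch the wrapper returns exactly $p_l$, i.e.\ the answer of a list-wise CI oracle.

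With the two oracles agreeing query by query, the claim for FCI (and identically for RFCI) follows by a simulation argument: imagine running the algorithm twice on the same input, once with the wrapper and once with the pure list-wise oracle; at every oracle call both runs receive the same reply and so remain in lockstep, terminating with identical graphs. I do not anticipate any serious obstacle here because Lemmas~\ref{lem_CD} and~\ref{lem_comb} have already absorbed the graph-theoretic content. The only point worth flagging is conceptual rather than technical: the first, test-wise query in Algorithm~\ref{alg_wrapper} is not logically needed for correctness---the second query alone already reproduces the list-wise oracle---but Lemma~\ref{lem_CD} is precisely what makes using the test-wise query as a short-circuit safe, allowing the (sample-inefficient) list-wise query to be skipped whenever test-wise deletion already witnesses dependence.
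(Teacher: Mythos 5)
Your proposal is correct and follows essentially the same route as the paper: both arguments reduce the theorem to showing that Algorithm~\ref{alg_wrapper} agrees with the list-wise CI oracle on every query, with Lemma~\ref{lem_CD} supplying the only nontrivial implication ($p_t=0\Rightarrow p_l=0$) needed to make the short-circuit safe; the paper merely phrases the case analysis as ``outputs zero iff dependent given $\bm{S}_l$'' and adds the explicit remark that d-separation faithfulness lets one pass between $\ci_d$ and $\ci$, a step you use implicitly when identifying the oracle's answer with $p_t$. Your closing observation that the test-wise query is logically redundant in the oracle setting and serves only as a sample-efficient short-circuit is accurate and consistent with the paper's intent.
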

\begin{proof}
Recall that d-separation and conditional independence are equivalent under d-separation faithfulness, so we can talk about d-separation and conditional independence interchangeably. It then suffices to show that $O_i \not \ci O_j | (\bm{W}, \bm{S}_l)$ if and only if Algorithm \ref{alg_wrapper} outputs zero. For the backward direction, if Algorithm \ref{alg_wrapper} outputs zero, then we must have (1) $O_i\not \ci O_j | (\bm{W},$ $\bm{S}_{O_i O_j | \bm{W}})$ or (2) $O_i \not \ci O_j | (\bm{W}, \bm{S}_l)$ (or both). If (2) holds, the conclusion follows immediately. If (1) holds, then the conclusion follows by Lemma \ref{lem_CD}. For the forward direction, assume for a contrapositive that Algorithm \ref{alg_wrapper} outputs one. The conclusion follows because Algorithm \ref{alg_wrapper} outputs one only if we have $O_i \ci O_j | (\bm{W},$ $\bm{S}_{O_i O_j \bm{W}})$ and $O_i \ci O_j | (\bm{W}, \bm{S}_l)$. \qed
\end{proof}
It immediately follows that FCI equipped with Algorithm \ref{alg_wrapper} is sound and complete under d-separation faithfulness and Assumption \ref{assump1}. 

\section{Experiments} \label{sec_exps}
We now describe the experiments used to assess the finite sample size performance of test-wise deletion as compared to existing approaches.
\subsection{Algorithms}
We compared the following algorithms with Fisher's z-test and $\alpha$ set to 0.01:
\begin{enumerate}
\item FCI with test-wise deletion (i.e., equipped with Algorithm \ref{alg_wrapper});
\item FCI with heuristic test-wise deletion, where we only run line \ref{alg:first_query} of Algorithm \ref{alg_wrapper}. We can justify this procedure under values missing completely at random (MCAR) as explained in detail in Appendix \ref{appendix_Htest};
\item FCI with list-wise deletion \cite{Spirtes01};
\item FCI with five different imputation methods including hot deck \cite{Cranmer13}, k-nearest neighbor with (k=5; k-NN) \cite{Kowarik16}, Bayesian linear regression (BLR) \cite{VanBuuren11,Brand99,Schafer97}, predictive mean matching (PMM) \cite{Little88,Rubin87,VanBuuren05,VanBuuren11} and random forests (ntree=10; RF) \cite{Doove14,Shah14,VanBuuren12}.
\end{enumerate}
We then repeated the comparisons with RFCI in place of FCI. We therefore compared a total of 16 methods.

Note that FCI with heuristic test-wise deletion is not justified in the general MNAR case, but we find that it performs well with finite sample CI tests and hence report its results mainly in the Appendix.

\subsection{Synthetic Data}
\subsubsection{Data Generation}
We used the following procedure in \cite{Colombo12} to generate 400 different Gaussian DAGs with an expected neighborhood size of $\mathbb{E}(N)=2$ and $p=20$ vertices. First, we generated a random adjacency matrix $\mathcal{A}$ with independent realizations of $\text{Bernoulli}(\mathbb{E}(N)/(p - 1))$ random variables in the lower triangle of the matrix and zeroes in the remaining entries. Next, we replaced the ones in $\mathcal{A}$ by independent realizations of a $\text{Uniform}([-1,-0.1]\cup[0.1, 1])$ random variable. We can interpret a nonzero entry $\mathcal{A}_{ij}$ as an edge from $X_i$ to $X_j$ with
coefficient $\mathcal{A}_{ij}$ in the following linear model:
\begin{equation}
\begin{aligned}
&X_1 = \varepsilon_1,\\
&X_i = \sum_{r=1}^{p-1} \mathcal{A}_{ir}X_r + \varepsilon_i,
\end{aligned}
\end{equation}
for $i = 2, \dots , p$ where $\varepsilon_1, . . ., \varepsilon_p$ are mutually independent $\mathcal{N}(0, 1)$ random variables. We finally introduced non-zero means $\mu$ by adding $p$ independent realizations of a $\mathcal{N}(0,4)$ random variable to $\bm{X}$. The variables $X_1,\dots,$ $X_p$ then have a multivariate Gaussian distribution with mean vector $\mu$ and covariance matrix $\Sigma = (\mathbb{I} - \mathcal{A})^{-1}(\mathbb{I} - \mathcal{A})^{-T}$, where $\mathbb{I}$ is the $p \times p$ identity matrix.

We generated MNAR datasets using the following procedure. We first randomly selected a set of 0-4 latent common causes $\bm{L}$ without replacement. We then selected a set of 1-2 additional latent variables $\widetilde{\bm{L}}$ without replacement from the set $\bm{X} \setminus \bm{L}$. Next, we randomly selected a subset of 3-6 variables in $\bm{O}=\{\bm{X} \setminus \{ \bm{L}, \widetilde{\bm{L}} \}\}$ without replacement for each $\widetilde{L} \in \widetilde{\bm{L}}$ and then removed the bottom $r$ percentile of samples from those 3-6 variables according to $\widetilde{L}$; we drew $r$ according to independent realizations of a $\text{Uniform}([0.1,0.5])$ random variable. Thus, the missing values depend directly on the unobservables $\widetilde{\bm{L}}$ in this MNAR case. Assumption \ref{assump1} is also satisfied because none of the missingness indicators have children. We finally eliminated all of the instantiations of the latent variables $\bm{L} \cup \widetilde{\bm{L}}$ from the dataset. 

For the MAR case, we again randomly selected a set of 0-4 latent common causes $\bm{L}$ (at least two children) without replacement. We then selected a set of 1-2 observable variables $\widetilde{\bm{O}} \subseteq \bm{O}=\{\bm{X} \setminus \bm{L} \}$ without replacement and then randomly selected a subset of 3-6 variables in $\bm{O} \setminus \widetilde{\bm{O}}$ without replacement for each variable in $\widetilde{O} \in \widetilde{\bm{O}}$. We next removed the bottom $r$ percentile of samples from those 3-6 variables according to $\widetilde{O}$; we again drew $r$ according to independent realizations of a $\text{Uniform}([0.1,0.5])$ random variable. Thus, the missing values depend directly on the observables $\widetilde{\bm{O}}$ with no missing values in this MAR case. We finally again eliminated all of the instantiations of the latent variables $\bm{L}$ from the dataset. 

We ultimately created datasets with sample sizes of 100, 250, 500, 1000 and 5000 for each of the 400 DAGs for both the MNAR and MAR cases. We therefore generated a total of $400 \times 5 \times 2 = 4000$ datasets.

\subsubsection{Metrics}
We compared the algorithms using the structural Hamming distance (SHD) from the oracle graphs in the MNAR and MAR cases. 

We set the selection variables to $\bm{S}_l$ for the oracle graphs in the MNAR case, since FCI with test-wise and list-wise deletion recover these graphs in the sample limit. Here, we hope FCI and RFCI with test-wise deletion will outperform FCI and RFCI with list-wise deletion, respectively, by obtaining lower SHD scores on average.

We also set the selection variables to the empty set for the oracle graphs in the MAR case, since a sound imputation method should recover the underlying distribution without selection bias using the observed values. Note however that test-wise deletion cannot eliminate the selection bias induced by the observed values in this case. Clearly then the imputation methods should outperform test-wise deletion under the metric of SHD to the oracle graph without selection bias. However, we still hope that the algorithms with test-wise deletion will perform reasonably well because none of variables in $\widetilde{\bm{O}}$ induce dense MAGs by design when acting as selection variables.

\subsubsection{Results}
\begin{figure*}
\centering
\begin{subfigure}{0.4\textwidth}
  \centering
  \includegraphics[width=0.8\linewidth]{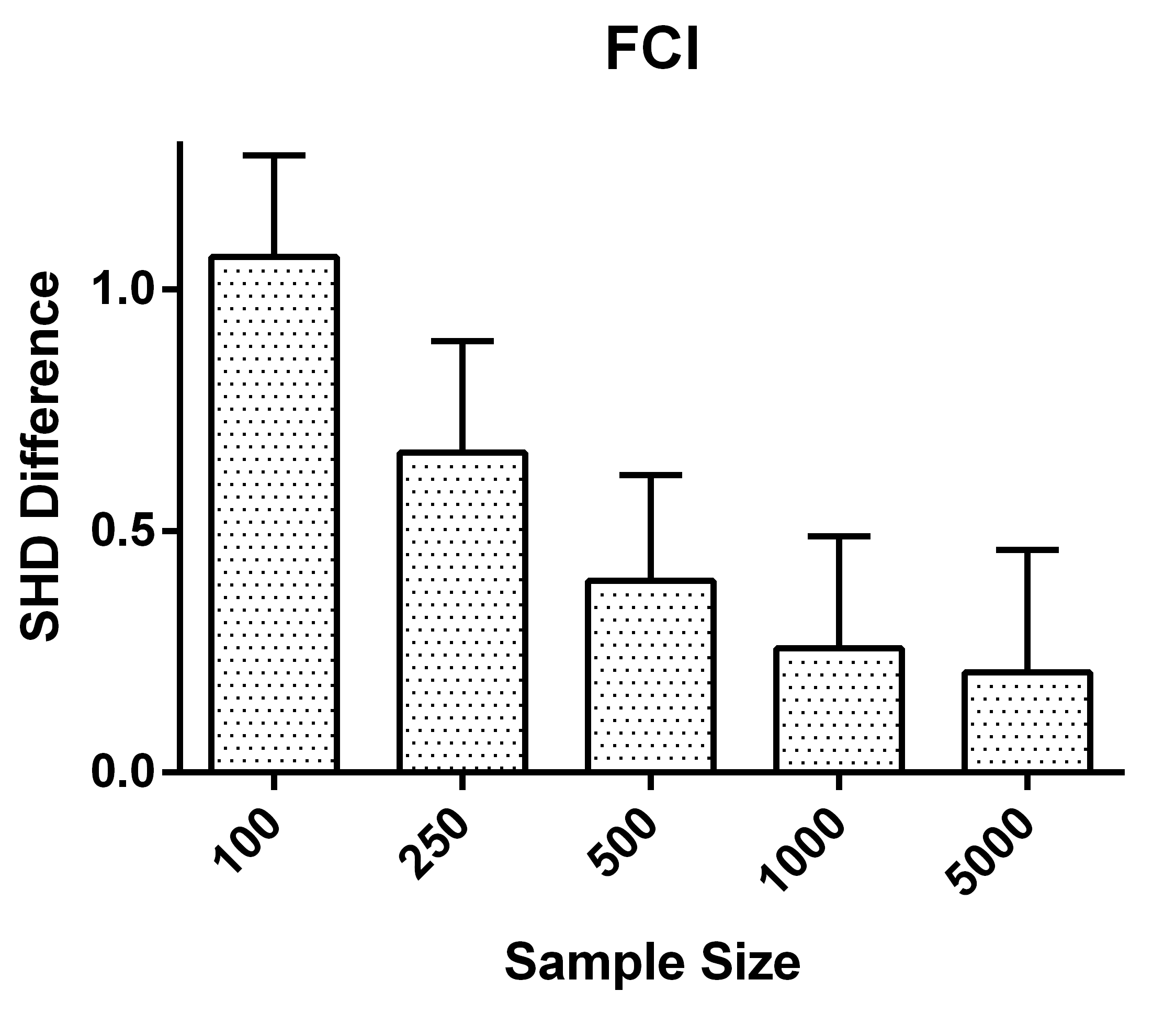}
  \caption{}
  \label{fig_MNAR:FCI_SHD}
\end{subfigure}
\begin{subfigure}{0.4\textwidth}
  \centering
  \includegraphics[width=0.8\linewidth]{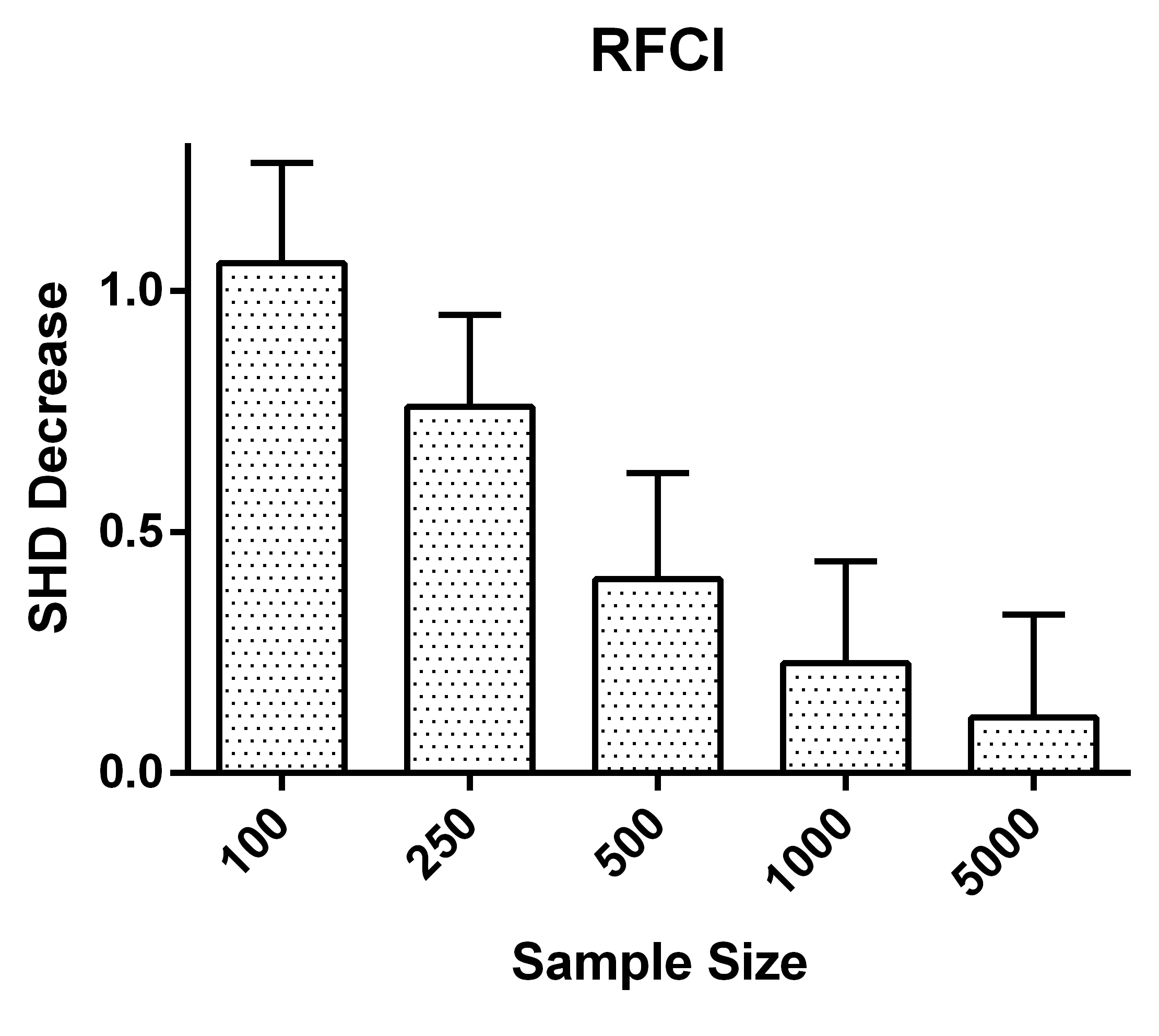}
  \caption{}
  \label{fig_MNAR:RFCI_SHD}
\end{subfigure}

\begin{subfigure}{0.4\textwidth}
  \centering
  \includegraphics[width=0.8\linewidth]{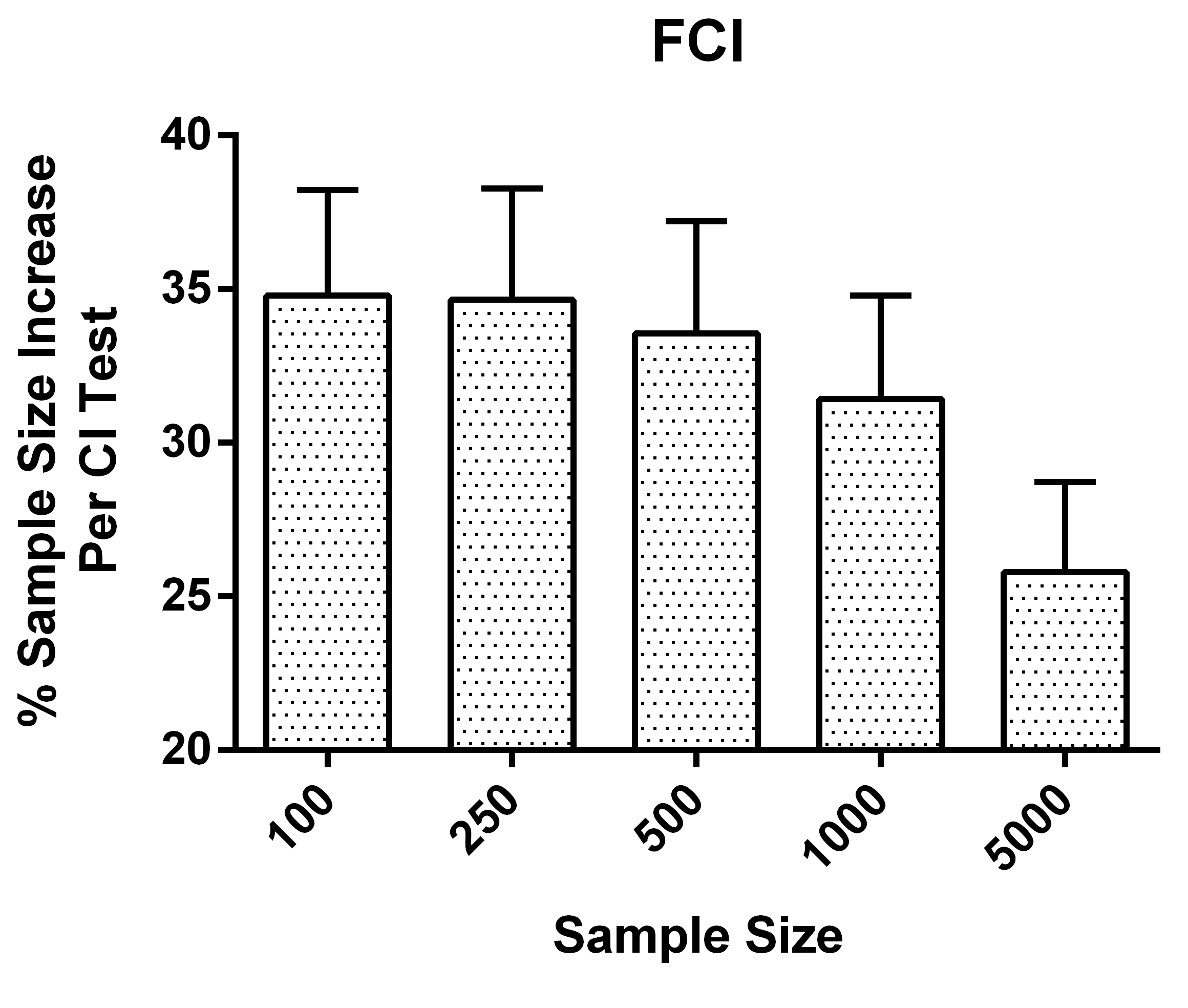}
  \caption{}
  \label{fig_MNAR:FCI_Samples}
\end{subfigure}
\begin{subfigure}{0.4\textwidth}
  \centering
  \includegraphics[width=0.8\linewidth]{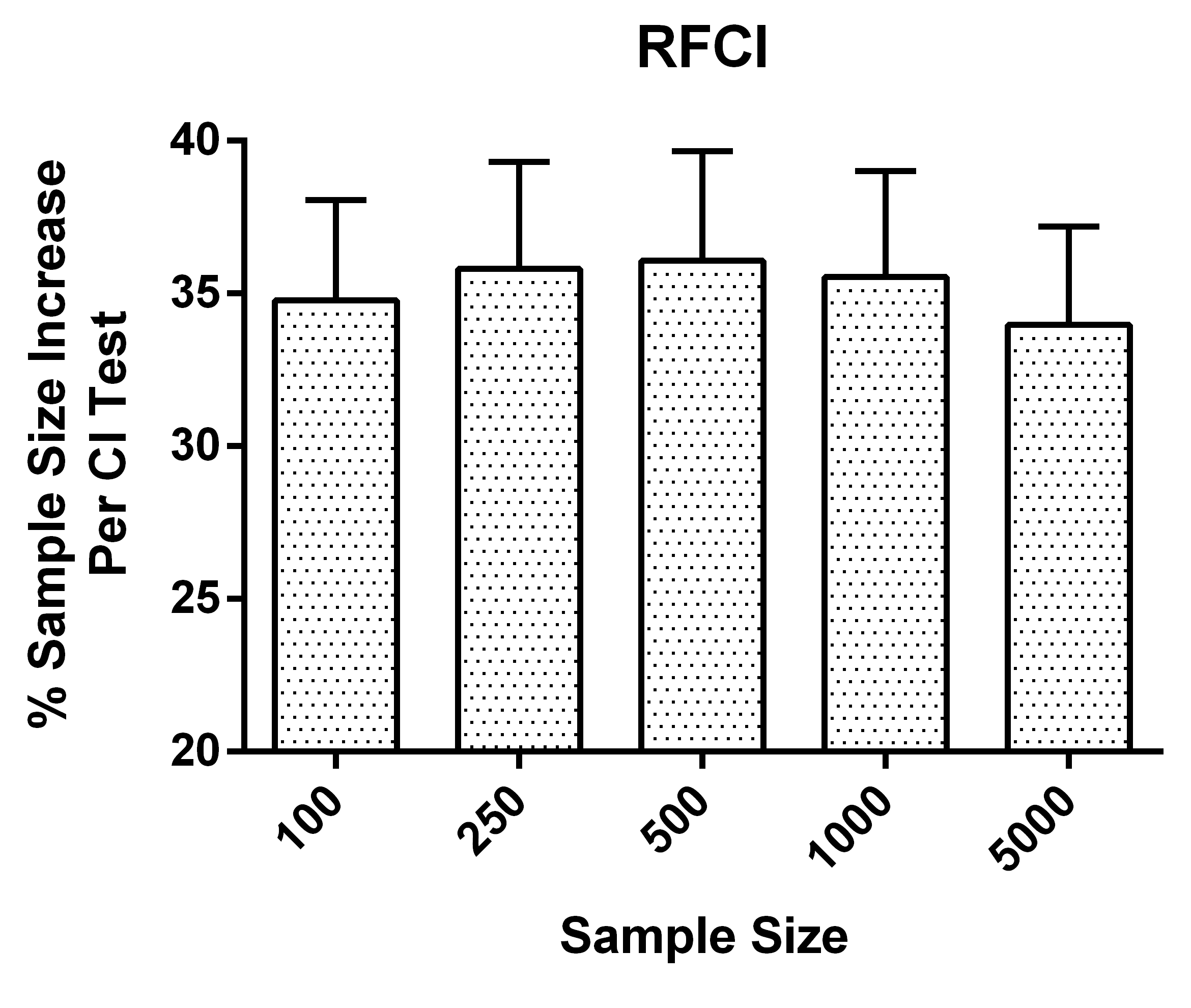}
  \caption{}
  \label{fig_MNAR:RFCI_Samples}
\end{subfigure}
\caption{FCI and RFCI with test-wise deletion vs. the same algorithms with list-wise deletion in the MNAR case. Test-wise deletion results in a decrease in the average SHD for FCI in (a) and RFCI in (b). The performance increase results because of a 25-35\% increase in sample size per CI test on average for FCI in (c) and RFCI in (d).} \label{fig_MNAR}
\end{figure*}

\begin{figure*}
\centering
\begin{subfigure}{0.4\textwidth}
  \centering
  \includegraphics[width=0.8\linewidth]{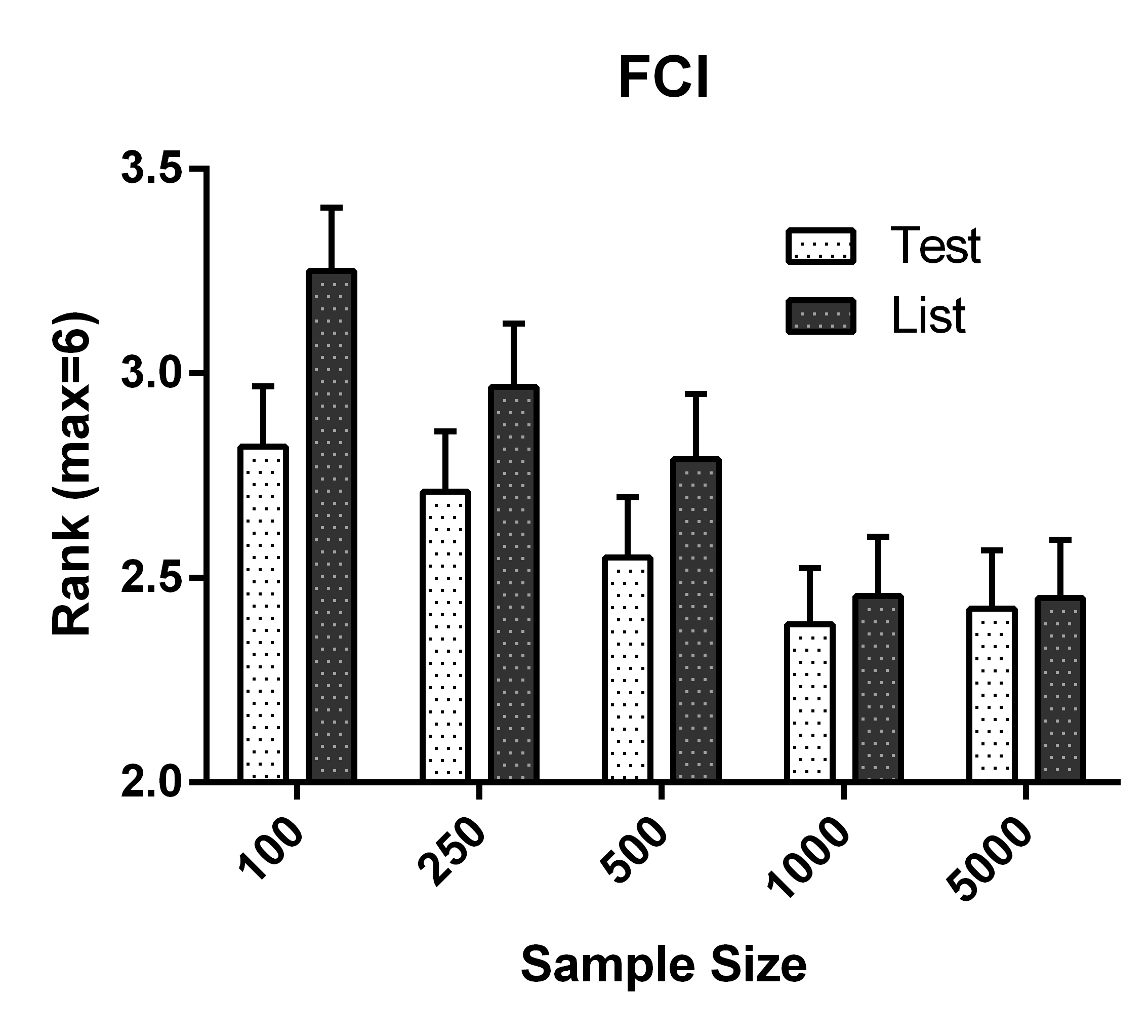}
  \caption{}
  \label{fig_MAR:FCI_rank}
\end{subfigure}
\begin{subfigure}{0.4\textwidth}
  \centering
  \includegraphics[width=0.8\linewidth]{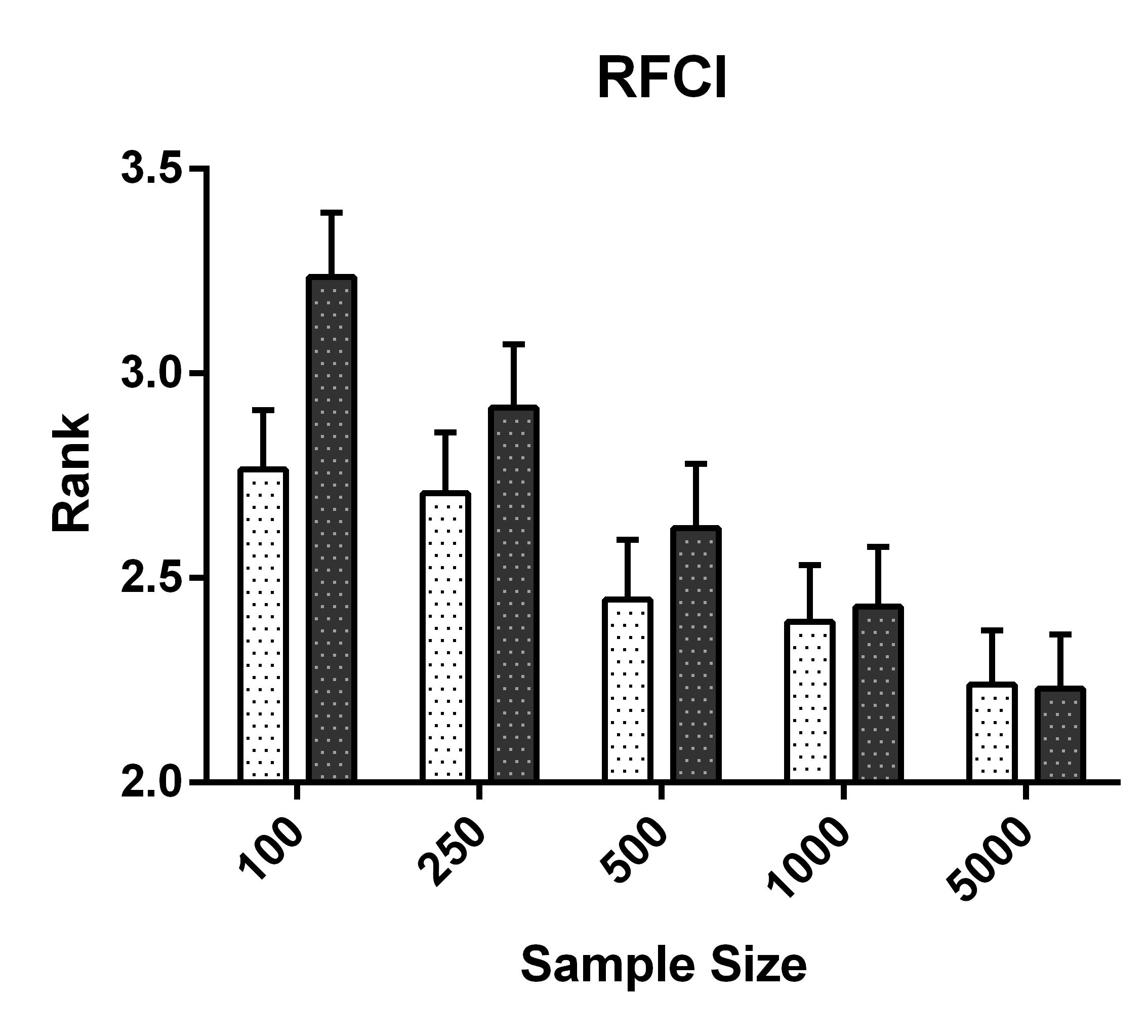}
  \caption{}
  \label{fig_MAR:RFCI_rank}
\end{subfigure}

\begin{subfigure}{0.4\textwidth}
  \centering
  \includegraphics[width=0.8\linewidth]{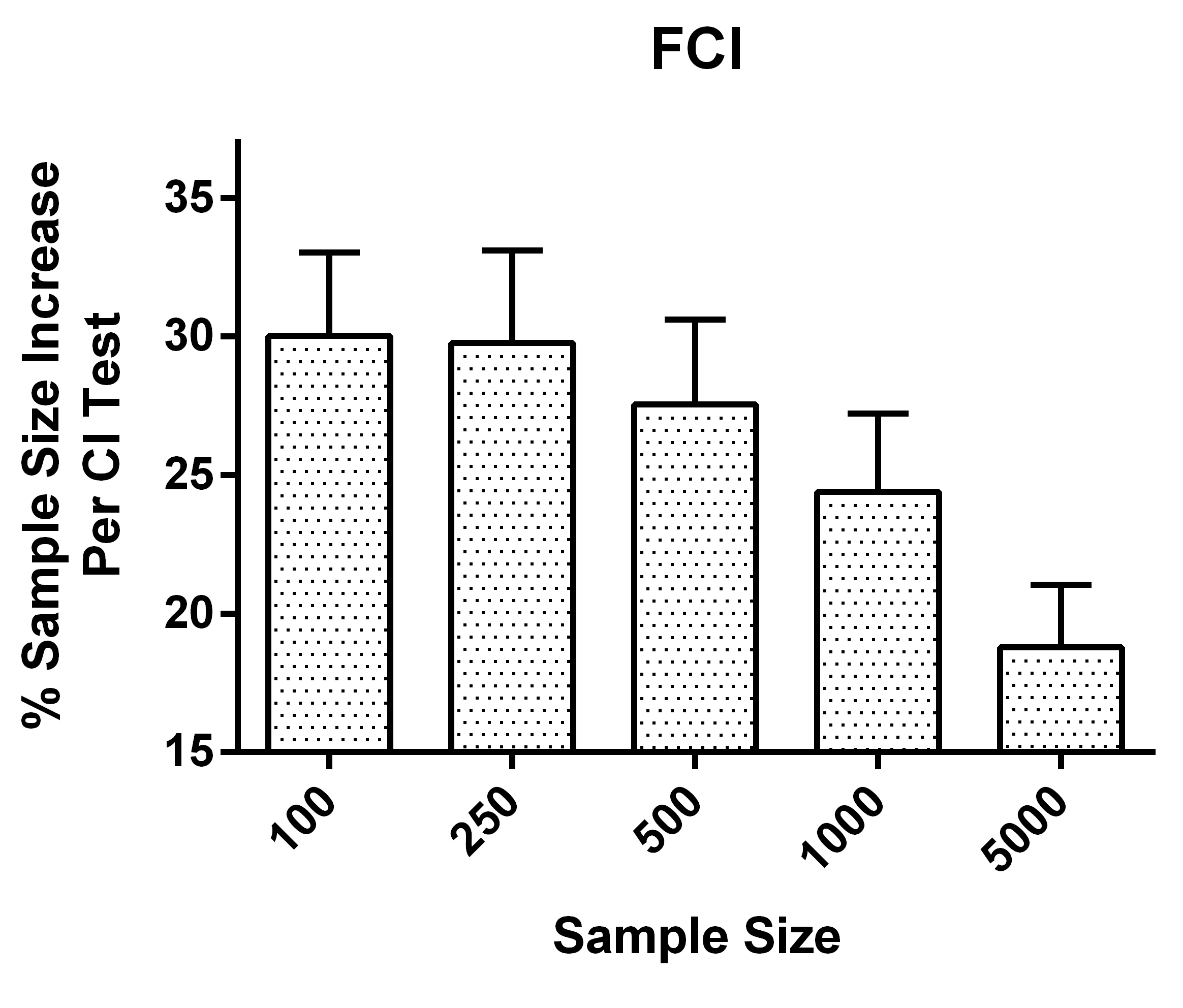}
  \caption{}
  \label{fig_MAR:FCI_Samples_imp}
\end{subfigure}
\begin{subfigure}{0.4\textwidth}
  \centering
  \includegraphics[width=0.8\linewidth]{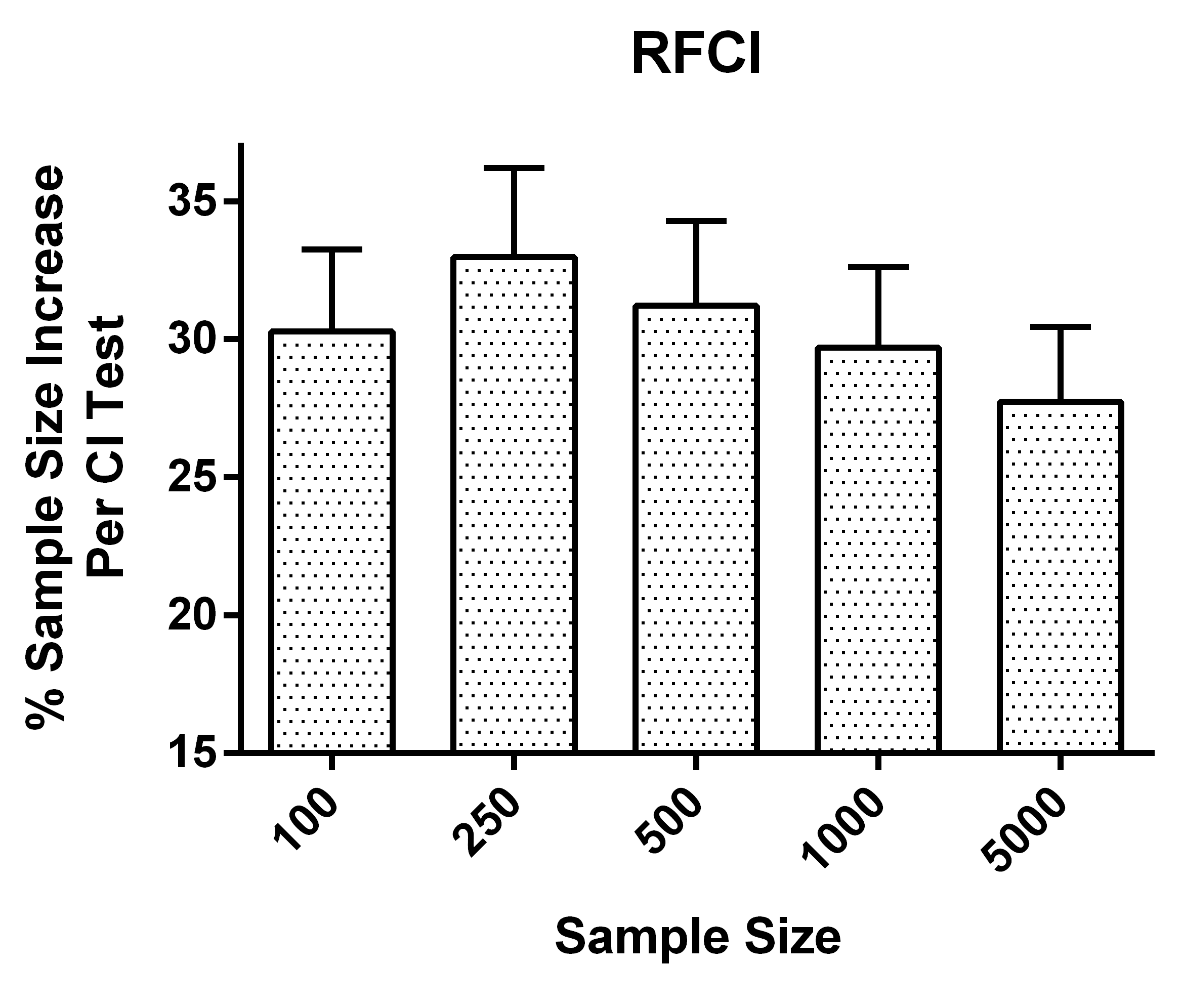}
  \caption{}
  \label{fig_MAR:RFCI_Samples_imp}
\end{subfigure}
\caption{Test-wise deletion vs. list-wise deletion as compared to five imputation methods in the MAR case. A rank of one denotes the best performance and a rank of six denotes the worst. Test-wise deletion has a smaller average rank than list-wise deletion for FCI in (a) and RFCI in (b). The performance increase of test-wise deletion again results because of increased sample efficiency for FCI in (c) and RFCI in (d).} \label{fig_MAR}
\end{figure*}

We have summarized the results for the MNAR case in Figure \ref{fig_MNAR}. We focus on comparing FCI and RFCI with test-wise deletion against the same algorithms with list-wise deletion. FCI and RFCI with any of the five imputation methods expectedly performed much worse in this task, so we relegate the imputation results to Figure \ref{fig_imp} in the Appendix. We have also summarized the excellent results of heuristic test-wise deletion in Figure \ref{fig_heur} in the Appendix. 

Figures \ref{fig_MNAR:FCI_SHD} and \ref{fig_MNAR:RFCI_SHD} suggest that the algorithms with test-wise deletion consistently outperform their list-wise deletion counterparts across all sample sizes. In fact, most test-wise vs. list-wise comparisons were significant using paired t-tests at a Bonferroni corrected threshold of 0.05/5 for both FCI and RFCI (exceptions: FCI at sample size 5000, t=2.018, p=0.044; RFCI 1000, t=2.108, p=0.036; RFCI 5000, t=1.055, p=0.292). We found the largest gains with smaller sample sizes, where the CCD algorithms are prone to error and greatly benefit from the sample size increase (sample size vs. SHD difference correlation; FCI: Pearson's r=-0.0941, t=-4.226, p=2.48E-5; RFCI: r = -0.111, t=-5.000, p=6.24E-7). Figure \ref{fig_MNAR:FCI_Samples} and \ref{fig_MNAR:RFCI_Samples} list the average sample size increase per executed CI test for test-wise deletion compared to list-wise deletion in percentage points. We see that test-wise deletion results in an approximately $25$ to $35\%$ increase in sample size than list-wise deletion regardless of the algorithm. RFCI in particular benefits the most at a steady $35\%$ regardless of the sample size because the algorithm utilizes smaller conditioning set sizes than FCI, so test-wise deletion results in the deletion of even fewer samples per CI test than list-wise deletion for RFCI. We conclude that FCI and RFCI with test-wise deletion consistently outperform their list-wise deletion counterparts because Algorithm \ref{alg_wrapper} allows the algorithms to more efficiently utilize the available samples.

We have also summarized the results for the MAR case in Figure \ref{fig_MAR}. Figures \ref{fig_MAR:FCI_rank} and \ref{fig_MAR:RFCI_rank} list the average ranked results against the five imputation methods. A rank of one denotes the best performance whereas a rank of six denotes the worst. We see that test-wise deletion again outperforms list-wise deletion. The effect is significant for all sample sizes between 100 and 500 at a Bonferonni level of 0.05/5 for FCI (max t = -5.223, max p = 2.84E-7) and for RFCI (max t = -4.147, max p = 4.12E-5). The performance improvements result from the improved sample efficiency of test-wise deletion as compared to list-wise deletion in both FCI (approx. $15$ to $30\%$ increase; Figure \ref{fig_MAR:FCI_Samples_imp}) and RFCI (approx. $30\%$ increase; Figure \ref{fig_MAR:RFCI_Samples_imp}).

Test-wise deletion and list-wise deletion also perform very well overall even in the MAR case as compared to imputation methods. Both methods perform approximately middle of the road (rank approx. $2.5$) and are only consistently outperformed by BLR and PMM, where the linear models are correctly specified. On the other hand, the non-parametric k-NN and random forest imputation methods often fall short of both test-wise and list-wise deletion. We conclude that FCI and RFCI with test-wise deletion are competitive against the same algorithms with imputation even when MAR strictly holds.

\subsection{Real Data}

\begin{figure*}
\centering
\begin{subfigure}{0.4\textwidth}
  \centering
  \includegraphics[width=0.8\linewidth]{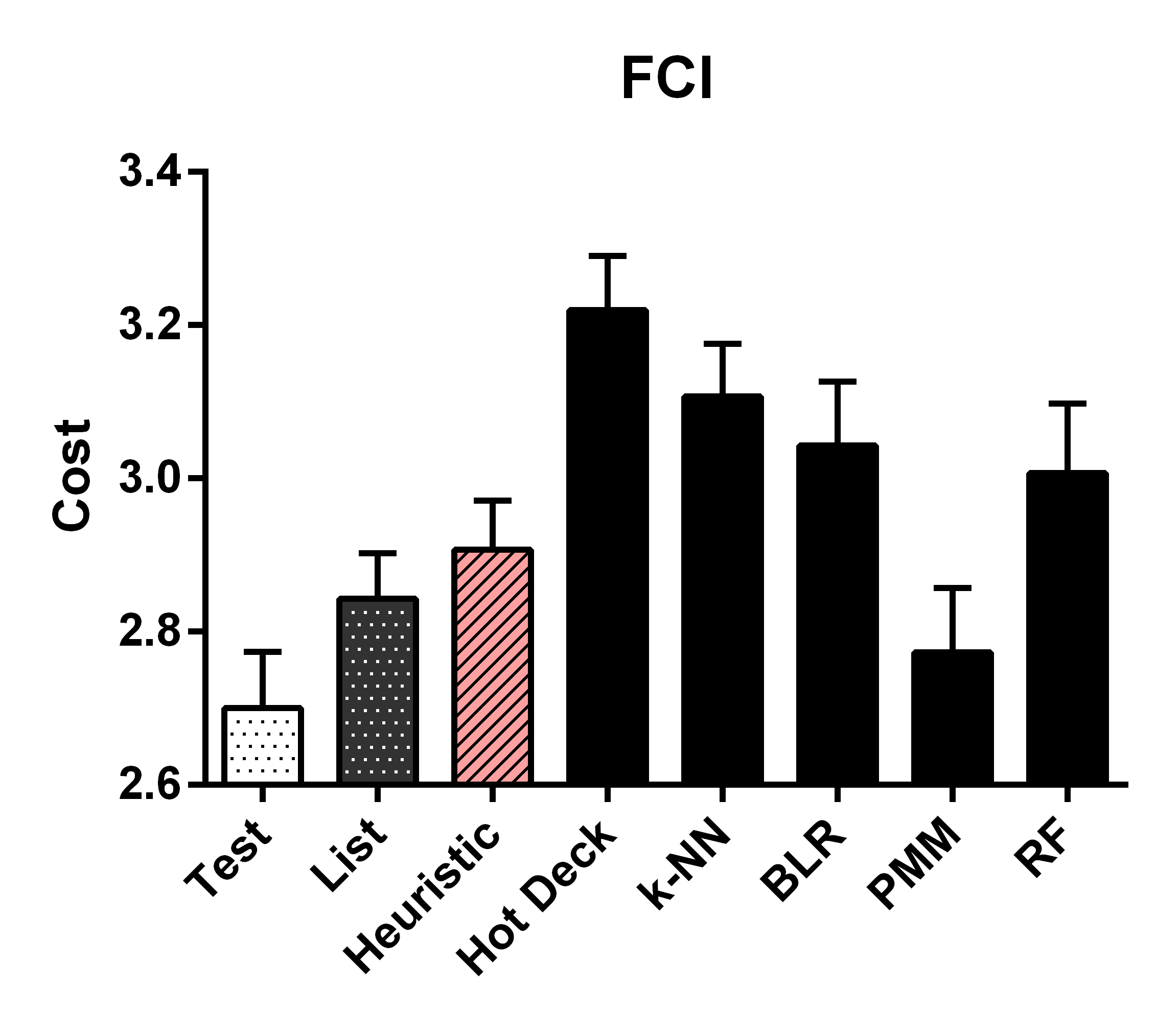}
  \caption{}
  \label{fig_real:Real_FCI}
\end{subfigure}
\begin{subfigure}{0.4\textwidth}
  \centering
  \includegraphics[width=0.8\linewidth]{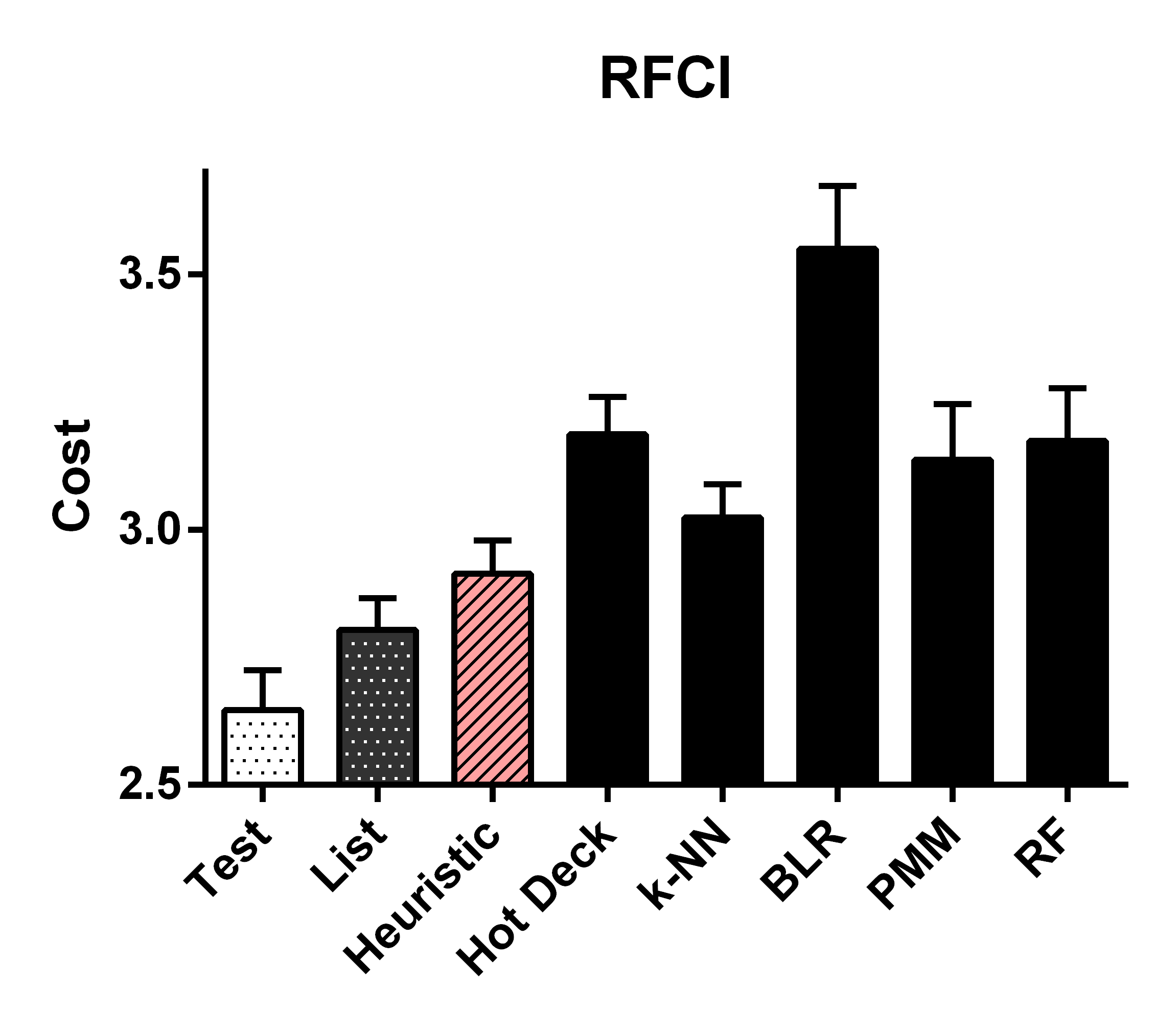}
  \caption{}
  \label{fig_real:Real_RFCI}
\end{subfigure}

\caption{Real data results for all methods in terms of the cost metric $w+v$. Test and list-wise deletion both perform well, but test-wise deletion performs the best when incorporated into both (a) FCI and (b) RFCI.} \label{fig_real}
\end{figure*}

We finally ran the same algorithms using the nonparametric CI test called RCIT \cite{Strobl17} at $\alpha=0.01$ on a publicly available longitudinal dataset from the Cognition and Aging USA (CogUSA) study \cite{McArdle15}, where scientists measured the cognition of men and women above 50 years of age. The dataset contains three waves of data, but we specifically focused on the first two waves in this dataset. The first two waves are only separated by one week, and the investigators collected data for the first wave by telephone. Note that neuropsychological interventions are near impossible within a week after phone-based testing, so no missingness indicator should be an ancestor of $\bm{O} \cup \bm{S}$. Moreover, to the best of our knowledge, the investigators attempted to measure each variable regardless of the missingness statuses of the other variables for each sample. We can therefore justify Assumption \ref{assump1} in this setting.

We used a cleaned version of the dataset containing 1514 samples over 16 variables; we specifically removed deterministic relations and variables related to administrative purposes as opposed to neuropsychological variables. Despite the cleaning, the dataset contains many missing values. List-wise deletion drops the number of samples from 1514 to 1106. However, this is also precisely the setting where we hope to use test-wise deletion in order to increase sample efficiency.

Note that we do not have access to a gold standard solution set in this case. However, we can develop an approximate solution set by utilizing two key facts. First, recall that we cannot have ancestral relations directed backwards in time. Thus, a variable in wave 2 cannot be an ancestor of a variable in wave 1; we can therefore count the number of edges between wave 1 and wave 2 with both a tail and an arrowhead at a vertex in wave 2. Second, the mental status score is a composite score that includes backwards counting as well as some other metrics. Thus, there should exist an edge between backwards counting and mental status, and the edge ideally should have a tail at backwards counting as well as an arrowhead at mental status in both waves.

We used the above solution set to construct the following cost metric; we counted the number of incorrect ancestral relations $w$ as well as counted the number of unoriented or incorrectly oriented endpoints between backwards counting and mental status $v$. A lower cost of $w+v$ therefore indicates better performance. 

We have summarized the results in Figure \ref{fig_real} after generating 300 bootstrapped datasets. Test-wise deletion outperforms 6 of the 7 other methods at a Bonferroni corrected threshold of 0.05/7 when incorporated into FCI (max t= -3.210, max p = 1.47E-3); test-wise deletion also outperformed FCI with PMM but not by a significant margin (t = -1.331, p = 0.184). Test-wise deletion did however outperform all of the other 7 methods with RFCI (max t=-3.482, max p=5.73E-4). Moreover, test-wise deletion conserves an average of 8.96\% more samples per CI test (95\% CI: 7.95-9.98\%) than list-wise deletion for FCI and similarly 8.82\% (95\% CI: 7.82-9.82\%) for RFCI. On the other hand, heuristic test-wise deletion conserves only 1.05\% (95\% CI: 1.00-1.10\%) more samples than test-wise deletion for FCI and only 0.98\% (95\% CI: 0.94-1.02\%) more samples for RFCI. We conclude that the real data results largely replicate the synthetic data results for the MNAR case.

\section{Conclusion} \label{sec_conc}

We proposed test-wise deletion as a strategy to improve upon list-wise deletion for CCD algorithms even when MNAR holds. Test-wise deletion specifically involves running FCI or RFCI using Algorithm \ref{alg_wrapper} without pre-processing the missing values. We proved soundness of the procedure so long as the missingness mechanisms do not causally affect each other in the underlying causal graph. Moreover, experiments highlighted the superior sample efficiency of test-wise deletion as compared to list-wise deletion. We conclude that test-wise deletion is a viable alternative to list-wise deletion when MNAR holds. 

We ultimately hope that test-wise deletion will prove useful for investigators wishing to apply CCD algorithms on data with missing values. Test-wise deletion is easily implemented in a few lines of code via Algorithm \ref{alg_wrapper}. Here, we simply call a CCD algorithm equipped with Algorithm \ref{alg_wrapper} in place of a normal CI test.

\begin{acknowledgements}
Research reported in this publication was supported by grant U54HG008540 awarded by the National Human Genome Research Institute through funds provided by the trans-NIH Big Data to Knowledge initiative. The research was also supported by the National Library of Medicine of the National Institutes of Health under award numbers T15LM007059 and R01LM012095. The content is solely the responsibility of the authors and does not necessarily represent the official views of the National Institutes of Health.
\end{acknowledgements}

\bibliographystyle{spmpsci}
\bibliography{thesis_biblio}

\begin{thebibliography}{10}
\providecommand{\url}[1]{{#1}}
\providecommand{\urlprefix}{URL }
\expandafter\ifx\csname urlstyle\endcsname\relax
  \providecommand{\doi}[1]{DOI~\discretionary{}{}{}#1}\else
  \providecommand{\doi}{DOI~\discretionary{}{}{}\begingroup
  \urlstyle{rm}\Url}\fi

\bibitem{Brand99}
Brand, J.: Development, Implementation and Evaluation of Multiple Imputation
  Strategies for the Statistical Analysis of Incomplete Data Sets.
\newblock The Author (1999).
\newblock \urlprefix\url{https://books.google.com/books?id=-Y0TywAACAAJ}

\bibitem{VanBuuren12}
van Buuren, S.: {Flexible Imputation of Missing Data (Chapman and Hall, CRC
  Interdisciplinary Statistics)}, 1 edn.
\newblock Chapman and Hall/CRC (2012)

\bibitem{VanBuuren05}
van Buuren, S., Brand, J.P.L., Groothuis-Oudshoorn, K.C., Rubin, D.B.: Fully
  conditional specification in multivariate imputation.
\newblock Journal of Statistical Computation and Simulation p. in press (2005)

\bibitem{VanBuuren11}
van Buuren, S., Groothuis-Oudshoorn, K.: mice: Multivariate imputation by
  chained equations in r.
\newblock Journal of Statistical Software \textbf{45}(3) (2011).
\newblock \urlprefix\url{https://www.jstatsoft.org/article/view/v045i03}

\bibitem{Colombo12}
Colombo, D., Maathius, M., Kalisch, M., Richardson, T.: Learning
  high-dimensional directed acyclic graphs with latent and selection variables.
\newblock Annals of Statistics \textbf{40}(1), 294--321 (2012).
\newblock \doi{10.1214/11-AOS940}.
\newblock \urlprefix\url{http://projecteuclid.org/euclid.aos/1333567191}

\bibitem{Cranmer13}
Cranmer, S.J., Gill, J.: {We Have to Be Discrete About This: A Non-Parametric
  Imputation Technique for Missing Categorical Data}.
\newblock British Journal of Political Science \textbf{43}, 425--449 (2013).
\newblock \doi{10.1017/s0007123412000312}.
\newblock \urlprefix\url{http://dx.doi.org/10.1017/s0007123412000312}

\bibitem{Daniel12}
Daniel, R.M., Kenward, M.G., Cousens, S.N., De~Stavola, B.L.: Using causal
  diagrams to guide analysis in missing data problems.
\newblock Stochastic Models \textbf{21}(3), 243--256 (2012)

\bibitem{Doove14}
Doove, L., Van~Buuren, S., Dusseldorp, E.: Recursive partitioning for missing
  data imputation in the presence of interaction effects.
\newblock Computational Statistics and Data Analysis \textbf{72}(C), 92--104
  (2014)

\bibitem{Kowarik16}
Kowarik, A., Templ, M.: Imputation with the {R} package {VIM}.
\newblock Journal of Statistical Software \textbf{74}(7), 1--16 (2016).
\newblock \doi{10.18637/jss.v074.i07}

\bibitem{Lauritzen90}
Lauritzen, S.L., Dawid, A.P., Larsen, B.N., Leimer, H.G.: {Independence
  Properties of Directed Markov Fields}.
\newblock Networks \textbf{20}(5), 491--505 (1990).
\newblock \doi{10.1002/net.3230200503}.
\newblock \urlprefix\url{http://dx.doi.org/10.1002/net.3230200503}

\bibitem{Little88}
Little, R.J.A.: Missing data adjustments in large surveys.
\newblock Journal of Business and Economic Statistics \textbf{6}, 287--296
  (1988)

\bibitem{McArdle15}
McArdle, J., Rodgers, W., Willis, R.: Cognition and aging in the usa (cogusa),
  2007-2009 (2015)

\bibitem{Mohan13}
Mohan, K., Pearl, J., Tian, J.: Graphical models for inference with missing
  data.
\newblock In: C.J.C. Burges, L.~Bottou, M.~Welling, Z.~Ghahramani, K.Q.
  Weinberger (eds.) Advances in Neural Information Processing Systems 26, pp.
  1277--1285. Curran Associates, Inc. (2013)

\bibitem{Rubin87}
Rubin, D.B.: Multiple Imputation for Nonresponse in Surveys.
\newblock Wiley (1987)

\bibitem{Schafer97}
Schafer, J.: Analysis of Incomplete Multivariate Data.
\newblock Chapman and Hall, London (1997)

\bibitem{Shah14}
Shah, A.D., Bartlett, J.W., Carpenter, J., Nicholas, O., Hemingway, H.:
  {Comparison of Random Forest and Parametric Imputation Models for Imputing
  Missing Data Using MICE: A CALIBER Study}.
\newblock American Journal of Epidemiology \textbf{179}(6), 764--774 (2014).
\newblock \doi{10.1093/aje/kwt312}.
\newblock \urlprefix\url{http://dx.doi.org/10.1093/aje/kwt312}

\bibitem{Shpitser15}
Shpitser, I., Mohan, K., Pearl, J.: Missing data as a causal and probabilistic
  problem.
\newblock In: Proceedings of the Thirty-First Conference on Uncertainty in
  Artificial Intelligence, {UAI} 2015, July 12-16, 2015, Amsterdam, The
  Netherlands, pp. 802--811 (2015)

\bibitem{Sokolova15}
Sokolova, E., Groot, P., Claassen, T., von Rhein, D., Buitelaar, J., Heskes,
  T.: Causal discovery from medical data: Dealing with missing values and a
  mixture of discrete and continuous data.
\newblock In: Artificial Intelligence in Medicine - 15th Conference on
  Artificial Intelligence in Medicine, {AIME} 2015, Pavia, Italy, June 17-20,
  2015. Proceedings, pp. 177--181 (2015).
\newblock \doi{10.1007/978-3-319-19551-3_23}.
\newblock \urlprefix\url{http://dx.doi.org/10.1007/978-3-319-19551-3_23}

\bibitem{Sokolova17}
Sokolova, E., von Rhein, D., Naaijen, J., Groot, P., Claassen, T., Buitelaar,
  J., Heskes, T.: Handling hybrid and missing data in constraint-based causal
  discovery to study the etiology of {ADHD}.
\newblock I. J. Data Science and Analytics \textbf{3}(2), 105--119 (2017).
\newblock \doi{10.1007/s41060-016-0034-x}.
\newblock \urlprefix\url{http://dx.doi.org/10.1007/s41060-016-0034-x}

\bibitem{Spirtes01}
Spirtes, P.: An anytime algorithm for causal inference.
\newblock In: in the Presence of Latent Variables and Selection Bias in
  Computation, Causation and Discovery, pp. 121--128. MIT Press (2001)

\bibitem{Spirtes00}
Spirtes, P., Glymour, C., Scheines, R.: Causation, Prediction, and Search, 2nd
  edn.
\newblock MIT press (2000)

\bibitem{Spirtes99}
Spirtes, P., Meek, C., Richardson, T.: An algorithm for causal inference in the
  presence of latent variables and selection bias.
\newblock In: Computation, Causation, and Discovery, pp. 211--252. AAAI Press,
  Menlo Park, CA (1999)

\bibitem{Spirtes96}
Spirtes, P., Richardson, T.: A polynomial time algorithm for determining dag
  equivalence in the presence of latent variables and selection bias.
\newblock In: Proceedings of the 6th International Workshop on Artificial
  Intelligence and Statistics (1996)

\bibitem{Strobl17}
Strobl, E.V., Zhang, K., Visweswaran, S.: {Approximate Kernel-based Conditional
  Independence Tests for Fast Non-Parametric Causal Discovery}  (2017).
\newblock \urlprefix\url{http://arxiv.org/abs/1702.03877}

\bibitem{Tillman08}
Tillman, R.E., Danks, D., Glymour, C.: Integrating locally learned causal
  structures with overlapping variables.
\newblock In: Advances in Neural Information Processing Systems 21, Proceedings
  of the Twenty-Second Annual Conference on Neural Information Processing
  Systems, Vancouver, British Columbia, Canada, December 8-11, 2008, pp.
  1665--1672 (2008)

\bibitem{Tillman14}
Tillman, R.E., Eberhardt, F.: Learning causal structure from multiple datasets
  with similar variable sets.
\newblock Behaviormetrika \textbf{41}(1), 41--64 (2014)

\bibitem{Tillman11}
Tillman, R.E., Spirtes, P.: Learning equivalence classes of acyclic models with
  latent and selection variables from multiple datasets with overlapping
  variables.
\newblock In: Proceedings of the Fourteenth International Conference on
  Artificial Intelligence and Statistics, {AISTATS} 2011, Fort Lauderdale, USA,
  April 11-13, 2011, pp. 3--15 (2011).
\newblock
  \urlprefix\url{http://www.jmlr.org/proceedings/papers/v15/tillman11a/tillman11a.pdf}

\bibitem{Triantafilou10}
Triantafilou, S., Tsamardinos, I., Tollis, I.G.: Learning causal structure from
  overlapping variable sets.
\newblock In: Proceedings of the Thirteenth International Conference on
  Artificial Intelligence and Statistics, {AISTATS} 2010, Chia Laguna Resort,
  Sardinia, Italy, May 13-15, 2010, pp. 860--867 (2010).
\newblock
  \urlprefix\url{http://www.jmlr.org/proceedings/papers/v9/triantafillou10a.html}

\bibitem{Zhang08}
Zhang, J.: On the completeness of orientation rules for causal discovery in the
  presence of latent confounders and selection bias.
\newblock Artif. Intell. \textbf{172}(16-17), 1873--1896 (2008).
\newblock \doi{10.1016/j.artint.2008.08.001}.
\newblock \urlprefix\url{http://dx.doi.org/10.1016/j.artint.2008.08.001}

\end{thebibliography}

\section{Appendix}

\subsection{Heuristic Test-Wise Deletion} \label{appendix_Htest}

We consider running FCI or RFCI with only line \ref{alg:first_query} of Algorithm \ref{alg_wrapper}; we therefore do not query the CI oracle with $\bm{S}_l$ when the CI oracle with $\bm{S}_{O_i O_j \bm{W}}$ outputs one. 

We refer to the above test-wise deletion strategy as \textit{heuristic test-wise deletion} because the procedure is not sound in general, even when Assumption \ref{assump1} holds. The problem lies in the inability to query the CI oracle with a consistent set of selection variables either directly (as with list-wise deletion) or indirectly (as with Algorithm \ref{alg_wrapper}). We thus often cannot soundly execute FCI or RFCI's orientation rules. For example, for FCI's R1, if we have the unshielded triple $O_i * \!\! \rightarrow O_k \circline O_j$ with $O_k \not \in \bm{An}(O_i, \bm{S}_{O_iO_j\bm{W}_1})$, (1) $O_i \ci_d O_j | (\bm{W}_2, \bm{S}_{O_iO_j\bm{W}_2})$ with $\bm{W}_2 \subseteq \bm{O} \setminus \{O_i, O_j\}$ minimal and (2) $O_k \in \bm{W}_2$, then we may claim that $O_k$ is an ancestor of $O_i, O_j$ or $\bm{S}_{O_iO_j\bm{W}_2}$ with (1) and (2) (but not $\bm{S}_{O_iO_j\bm{W}_1}$; see Lemma 14 in \cite{Spirtes99}). We thus cannot conclude in general that we have $O_k \in \bm{An}(O_j)$ by using the arrowhead at $O_k$; we can only conclude that $O_k \in \bm{An}(O_j, \bm{S}_{O_iO_j\bm{W}_2})$; this fact in turn prevents us from executing R1 by orienting $O_i * \!\! \rightarrow O_k \circline\!\! * O_j$ as $O_i * \!\! \rightarrow O_k \rightarrow O_j$.

We can however justify heuristic test-wise deletion under MCAR, where missing values do not depend on any other measured or missing values. One interpretation of MCAR in terms of a causal graph reads as follows:
\begin{assumption} \label{assump2}
There does not exist an undirected path between any member of $\bm{O}$ and any member of $\cup_{i=1}^q M_i$ in the underlying DAG.\footnote{This MCAR interpretation implies that $\cup_{i=1}^q M_i \ci_d \bm{O}$, so the interpretation is similar to the MCAR interpretation introduced in \cite{Mohan13}, where we have $\cup_{i=1}^q M_i \ci_d (\{\bm{O} \cup \bm{L} \cup \bm{S}\} \setminus \cup_{i=1}^q M_i)$.}
\end{assumption}

Now Assumption \ref{assump2} states that the set $\cup_{i=1}^q M_i$ plays no role in the conditional dependence relations between the observables. Specifically:
\begin{lemma} \label{lem_comb2}
Consider Assumption \ref{assump2}. Then $O_i \not \ci_d O_j |$ $(\bm{W}, \bm{S}_{O_iO_j\bm{W}})$ if and only if $O_i \not \ci_d O_j | (\bm{W}, \bm{S})$.
\end{lemma}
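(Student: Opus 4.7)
The plan is to show both directions at once by establishing that the variables in $\bm{T} := \bm{S}_{O_iO_j\bm{W}} \setminus \bm{S}$, which lie in $\cup_{i=1}^q M_i$, neither open nor close any path between $O_i$ and $O_j$. The key structural observation is that Assumption \ref{assump2} puts the missingness indicators in a different ``undirected-path component'' of the DAG than $\bm{O}$, and $O_i$, $O_j$, and every $W \in \bm{W}$ are all observables.

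First I would observe that any path $\pi$ between $O_i$ and $O_j$ cannot contain a vertex $T$ from $\cup_{i=1}^q M_i$. For if some such $T$ lay on $\pi$, then the initial segment of $\pi$ would be an undirected path from $O_i \in \bm{O}$ to $T \in \cup_{i=1}^q M_i$, contradicting Assumption \ref{assump2}. By the same reasoning, no vertex $V$ on $\pi$ can have a descendant in $\cup_{i=1}^q M_i$: concatenating the subpath of $\pi$ from $O_i$ to $V$ with the directed path from $V$ to that descendant would again give an undirected path from $O_i$ to a variable in $\cup_{i=1}^q M_i$.

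For the forward direction, assume $O_i \not\ci_d O_j | (\bm{W}, \bm{S}_{O_iO_j\bm{W}})$ via an active path $\pi$. Since $\bm{S}_{O_iO_j\bm{W}} = \bm{S} \cup \bm{T}$ with $\bm{T} \subseteq \cup_{i=1}^q M_i$, removing $\bm{T}$ from the conditioning set cannot turn a collider on $\pi$ from open to closed (no vertex on $\pi$ has a descendant in $\bm{T}$, by the observation above), and cannot turn a non-collider from open to closed either (no vertex in $\bm{T}$ lies on $\pi$). So $\pi$ remains active given $(\bm{W},\bm{S})$. For the reverse direction, assume $O_i \not\ci_d O_j | (\bm{W}, \bm{S})$ via an active path $\pi$; the same observation, applied in reverse, shows that adding $\bm{T}$ to the conditioning set cannot open a collider (none of the colliders on $\pi$ have descendants in $\bm{T}$) and cannot block a non-collider (no $T$ is a non-collider on $\pi$ since no $T$ is on $\pi$ at all), so $\pi$ remains active given $(\bm{W}, \bm{S}_{O_iO_j\bm{W}})$.

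The main subtlety, and the step I would be most careful with, is the ``no descendant'' argument: it relies on interpreting Assumption \ref{assump2} as forbidding \emph{undirected} paths (and hence directed paths, which are special cases) between $\bm{O}$ and $\cup_{i=1}^q M_i$. Once that interpretation is pinned down, the rest is a routine bookkeeping of which vertices conditioning on $\bm{T}$ can possibly affect on an active path, and the conclusion follows symmetrically in both directions.
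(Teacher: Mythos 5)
Your proof is correct and follows essentially the same route as the paper's: decompose the test-wise selection set as $\bm{S}$ plus a remainder $\bm{T}$ contained in $\cup_{i=1}^q M_i$, and use Assumption~\ref{assump2} to show that conditioning on $\bm{T}$ can neither block nor activate any path between $O_i$ and $O_j$. You are somewhat more explicit than the paper in handling the collider case (arguing that no vertex on such a path can even have a \emph{descendant} in $\cup_{i=1}^q M_i$), which is a point the paper's proof leaves implicit, but the underlying argument is the same.
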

\begin{proof}
The proof follows trivially if $\bm{S}_{O_iO_j\bm{W}}=\bm{S}$, so assume that we have $\bm{S}_{O_iO_j\bm{W}}\supset \bm{S}$. Let $\bm{T}_{O_iO_j\bm{W}} = \{ \bm{S}_l \setminus \bm{S} \}$. Then no member of $\bm{T}_{O_iO_j\bm{W}}$ is on any undirected path between $O_i$ and $O_j$ by Assumption \ref{assump2}. Hence, no subset of $\bm{T}_{O_iO_j\bm{W}}$ can be used to block an active path $\pi$ between $O_i$ and $O_j$. This proves the backward direction. Moreover, no subset of $\bm{T}_{O_iO_j\bm{W}}$ can be used to activate any inactive path $\pi$ between $O_i$ and $O_j$. This proves the forward direction by contrapositive. \qed
\end{proof}
The corresponding statement to Theorem \ref{thm1} then reads as follows:
\begin{proposition} \label{thm2}
Consider Assumption \ref{assump2}. Further assume d-separation faithfulness. Then FCI using only line \ref{alg:first_query} of Algorithm \ref{alg_wrapper} outputs the same graph as FCI using a CI oracle with $\bm{S}$. The same result holds for RFCI.
\end{proposition}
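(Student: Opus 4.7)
The plan is to mirror the proof of Theorem~\ref{thm1}, replacing the joint use of Lemmas~\ref{lem_CD} and~\ref{lem_comb} with a single application of Lemma~\ref{lem_comb2}. Heuristic test-wise deletion and the ``ideal'' oracle that conditions on $\bm{S}$ differ only in which conditioning set is supplied to each query, so I would argue by determinism of FCI/RFCI that the two procedures produce identical outputs once I show that each pair of corresponding queries returns the same verdict.

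First I would invoke d-separation faithfulness to translate every CI query posed by FCI (or RFCI) into an equivalent d-separation query. This reduces the claim to a purely graphical statement: for every $O_i, O_j \in \bm{O}$ and every $\bm{W} \subseteq \bm{O} \setminus \{O_i, O_j\}$ that either run could issue, one must show that $O_i \ci_d O_j | (\bm{W}, \bm{S}_{O_iO_j\bm{W}})$ holds if and only if $O_i \ci_d O_j | (\bm{W}, \bm{S})$.

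Next I would appeal directly to Lemma~\ref{lem_comb2}, which is precisely the contrapositive form of this biconditional and which holds under Assumption~\ref{assump2}. Hence every response produced by line~\ref{alg:first_query} of Algorithm~\ref{alg_wrapper} coincides with the response an oracle conditioning on $\bm{S}$ would give. Because FCI is a deterministic function of its oracle's answers, both runs traverse the same sequence of adjacency deletions and edge orientations and therefore return the same graph. The same determinism argument applies verbatim to RFCI, yielding the second claim of the proposition.

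The main obstacle, insofar as there is one, is essentially bookkeeping: the selection set $\bm{S}_{O_iO_j\bm{W}}$ varies from query to query, so Lemma~\ref{lem_comb2} must be invoked afresh for each query that FCI/RFCI might pose. Since the lemma is uniform in the choice of $O_i$, $O_j$, and $\bm{W}$, this causes no real difficulty. All of the graph-theoretic weight has already been discharged in Lemma~\ref{lem_comb2}, so the remainder of the argument is the same determinism-plus-faithfulness skeleton that underpinned Theorem~\ref{thm1}.
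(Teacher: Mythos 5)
Your proposal is correct and follows essentially the same route as the paper: the paper's proof likewise reduces the claim to showing that line~\ref{alg:first_query} outputs zero if and only if $O_i \not\ci O_j \mid (\bm{W}, \bm{S})$, which it obtains directly from Lemma~\ref{lem_comb2} together with d-separation faithfulness, with the determinism of FCI/RFCI given the oracle's answers left implicit. You have simply spelled out the bookkeeping that the paper compresses into two sentences.
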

\begin{proof}
It suffices to show that $O_i \not \ci O_j |$  $(\bm{W}, \bm{S})$ if and only if line \ref{alg:first_query} of Algorithm \ref{alg_wrapper} outputs zero. This follows directly by Lemma \ref{lem_comb2} and d-separation faithfulness. \qed
\end{proof}

Notice however that Assumption \ref{assump2} is much more difficult to justify in practice than Assumption \ref{assump1}. We therefore do not recommend FCI or RFCI with only line \ref{alg:first_query} of Algorithm \ref{alg_wrapper} in general, because these algorithms may not be sound when dealing with real data. 

Heuristic test-wise deletion can nonetheless perform very well in the finite sample size case even when Assumption \ref{assump2} is violated due to the extra boost in sample size provided by avoiding list-wise deletion altogether. We have summarized the simulation results in Figures \ref{fig_heur} and \ref{fig_heur_skel} in the MNAR case. Heuristic test-wise deletion outperforms test-wise deletion slightly by at most 0.203 SHD points on average (Figures \ref{fig_heur:FCI_SHD_heur} and \ref{fig_heur:RFCI_SHD_heur}). We could account for the increase in performance by a 5-15\% increase in the average sample size per CI test compared to test-wise deletion (Figures \ref{fig_heur:FCI_samples_heur} and \ref{fig_heur:RFCI_samples_heur}). However, heuristic test-wise deletion generally underperforms test-wise deletion in skeleton discovery by a margin gradually increasing in sample size. This dichotomy between the overall SHD and the skeleton SHD occurs because accurate endpoint orientation requires more samples than accurate skeleton discovery in general. We conclude that while heuristic test-wise deletion usually outperforms test-wise deletion when taking endpoint orientations into account, the performance improvement is small.

The results for the MAR case follow similarly, as summarized in Figure \ref{fig_heur_rank}. Heuristic test-wise deletion claims an average lower rank than test-wise deletion due to a 5-20\% increase in sample size in this scenario.

\begin{figure*}
\centering
\begin{subfigure}{0.4\textwidth}
  \centering
  \includegraphics[width=0.8\linewidth]{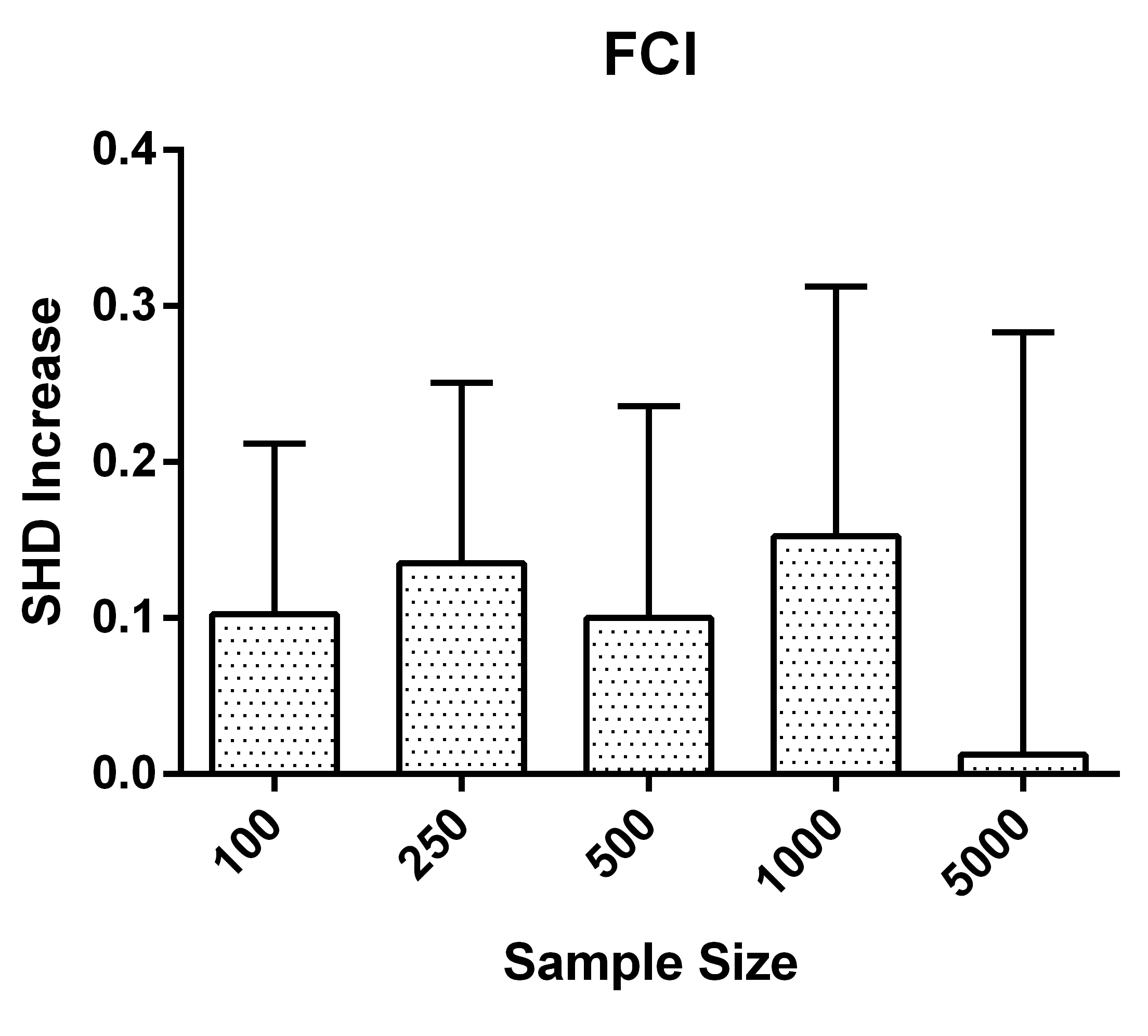}
  \caption{}
  \label{fig_heur:FCI_SHD_heur}
\end{subfigure}
\begin{subfigure}{0.4\textwidth}
  \centering
  \includegraphics[width=0.8\linewidth]{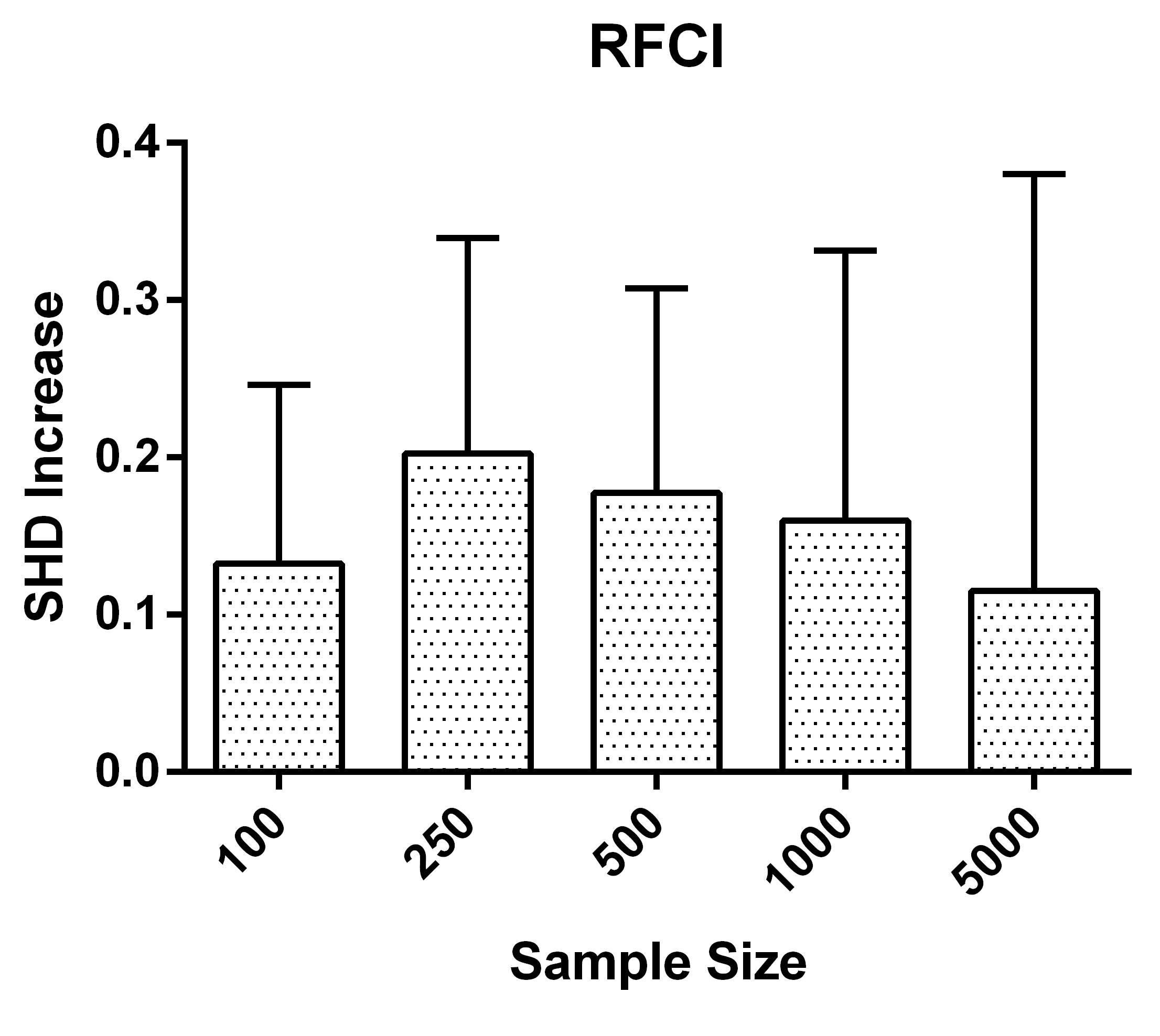}
  \caption{}
  \label{fig_heur:RFCI_SHD_heur}
  \end{subfigure}
  
  \begin{subfigure}{0.4\textwidth}
  \centering
  \includegraphics[width=0.8\linewidth]{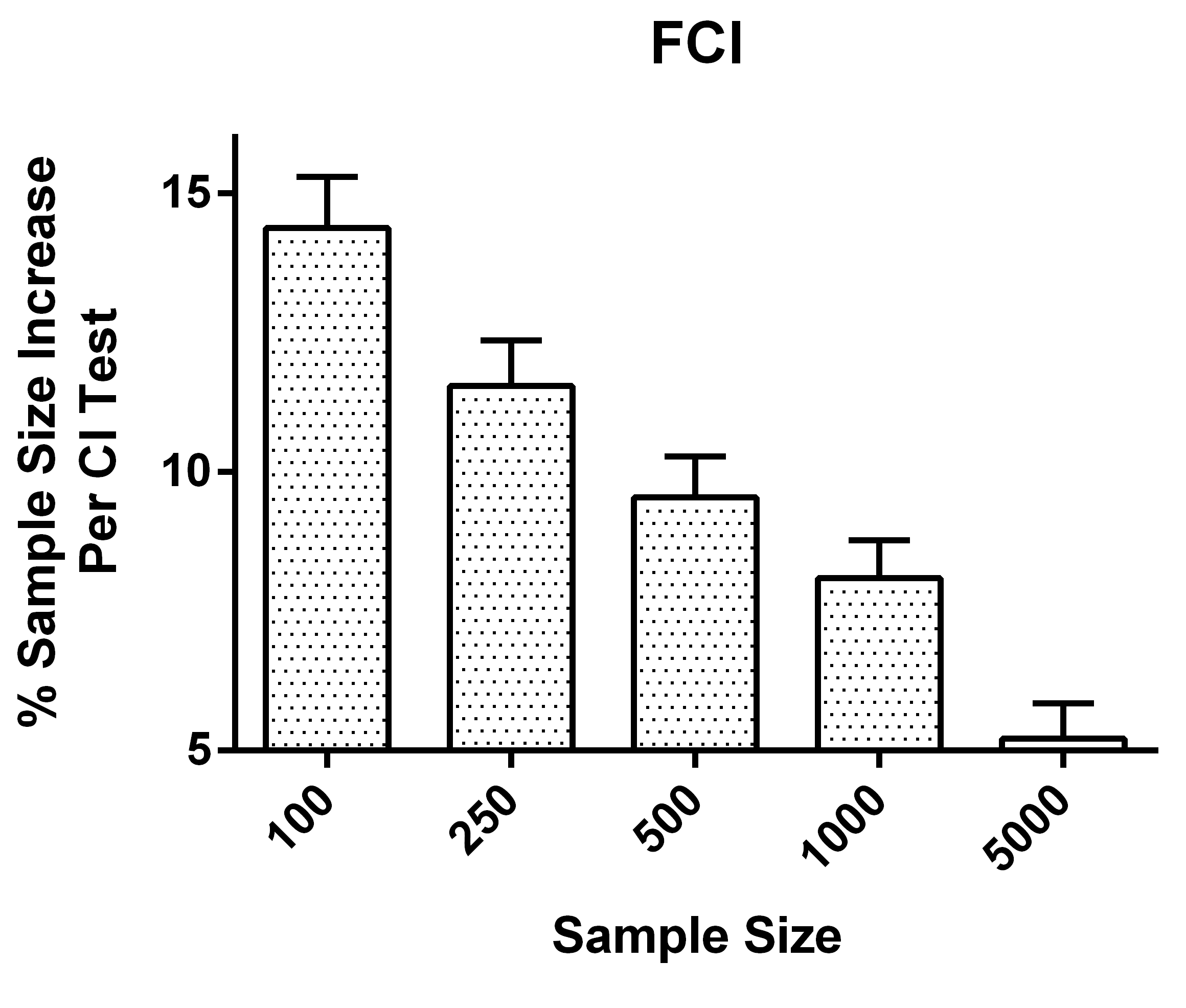}
  \caption{}
  \label{fig_heur:FCI_samples_heur}
\end{subfigure}
\begin{subfigure}{0.4\textwidth}
  \centering
  \includegraphics[width=0.8\linewidth]{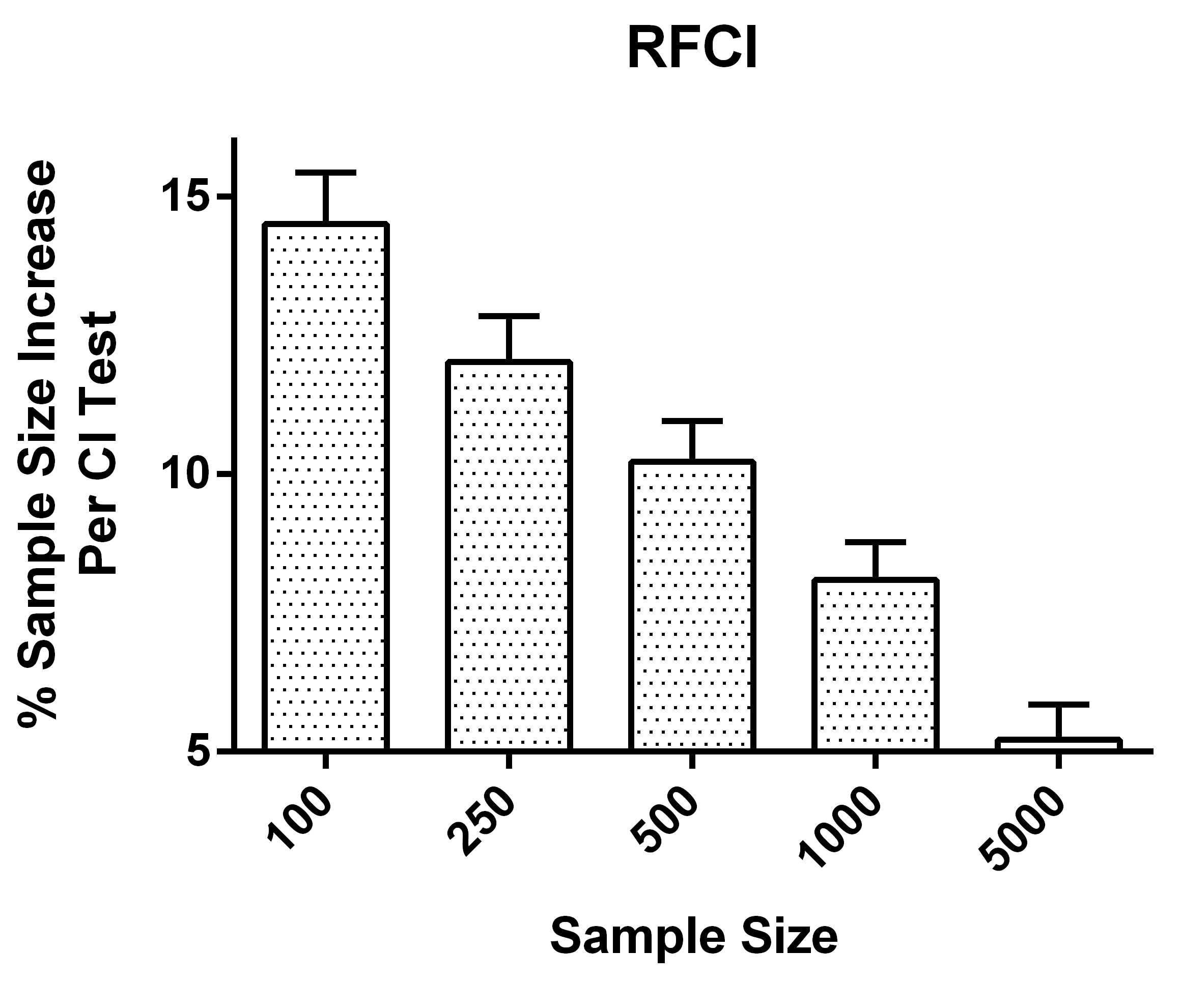}
  \caption{}
  \label{fig_heur:RFCI_samples_heur}
\end{subfigure}

\caption{Test-wise deletion vs. heuristic test-wise deletion in the MNAR case. We find that test-wise deletion underperforms heuristic test-wise deletion by yielding slightly larger SHD values on average according to (a) and (b); notice that the y-axis corresponds to an \textit{increase} in the SHD rather than a decrease. Subfigures (c) and (d) show the increase in average sample size per CI test for heuristic test-wise deletion as compared to test-wise deletion.} \label{fig_heur}
\end{figure*}

\begin{figure*}
\centering
\begin{subfigure}{0.4\textwidth}
  \centering
  \includegraphics[width=0.8\linewidth]{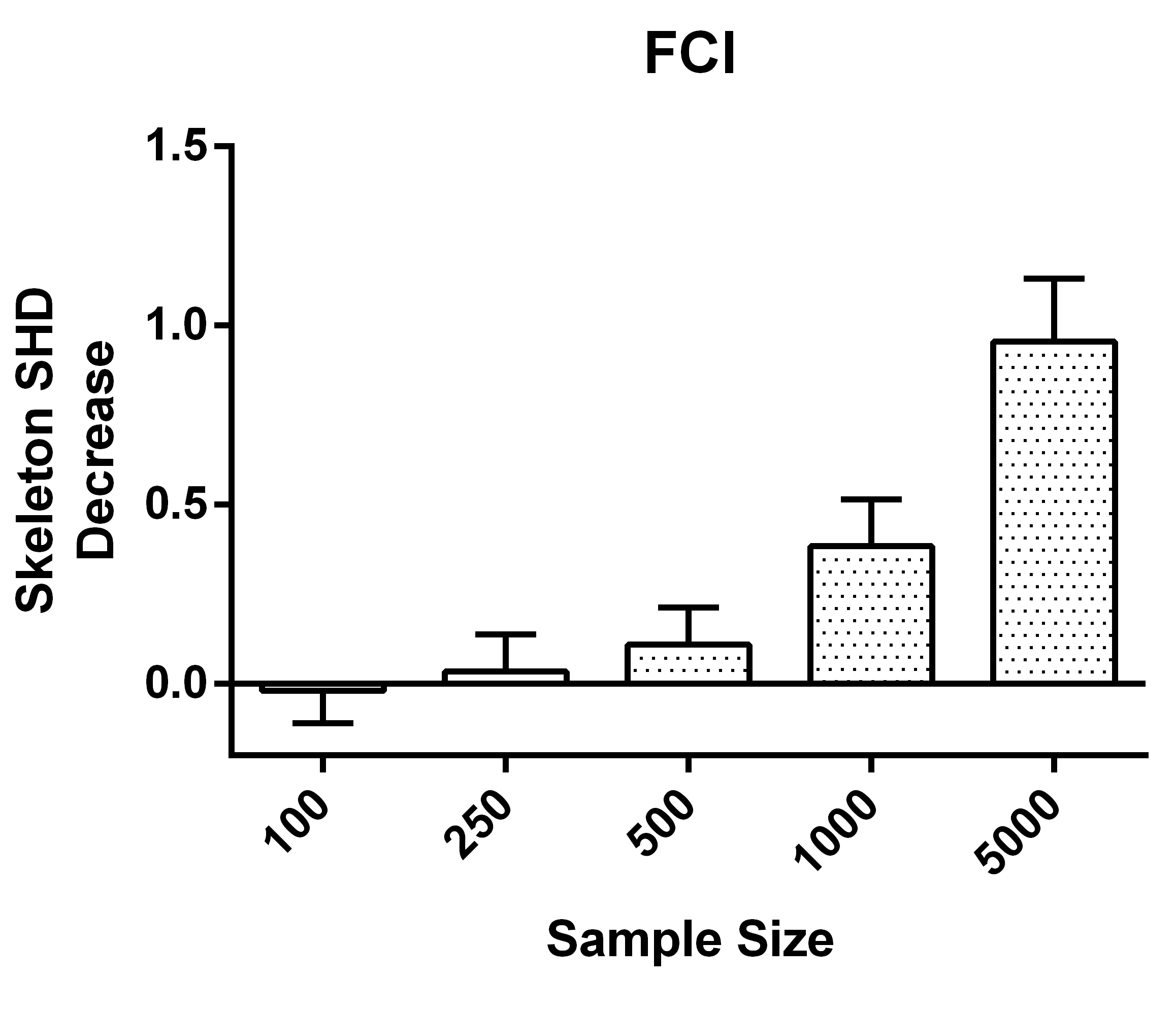}
  \caption{}
  \label{fig_heur_skel:FCI_SHD_heur_skel}
\end{subfigure}
\begin{subfigure}{0.4\textwidth}
  \centering
  \includegraphics[width=0.8\linewidth]{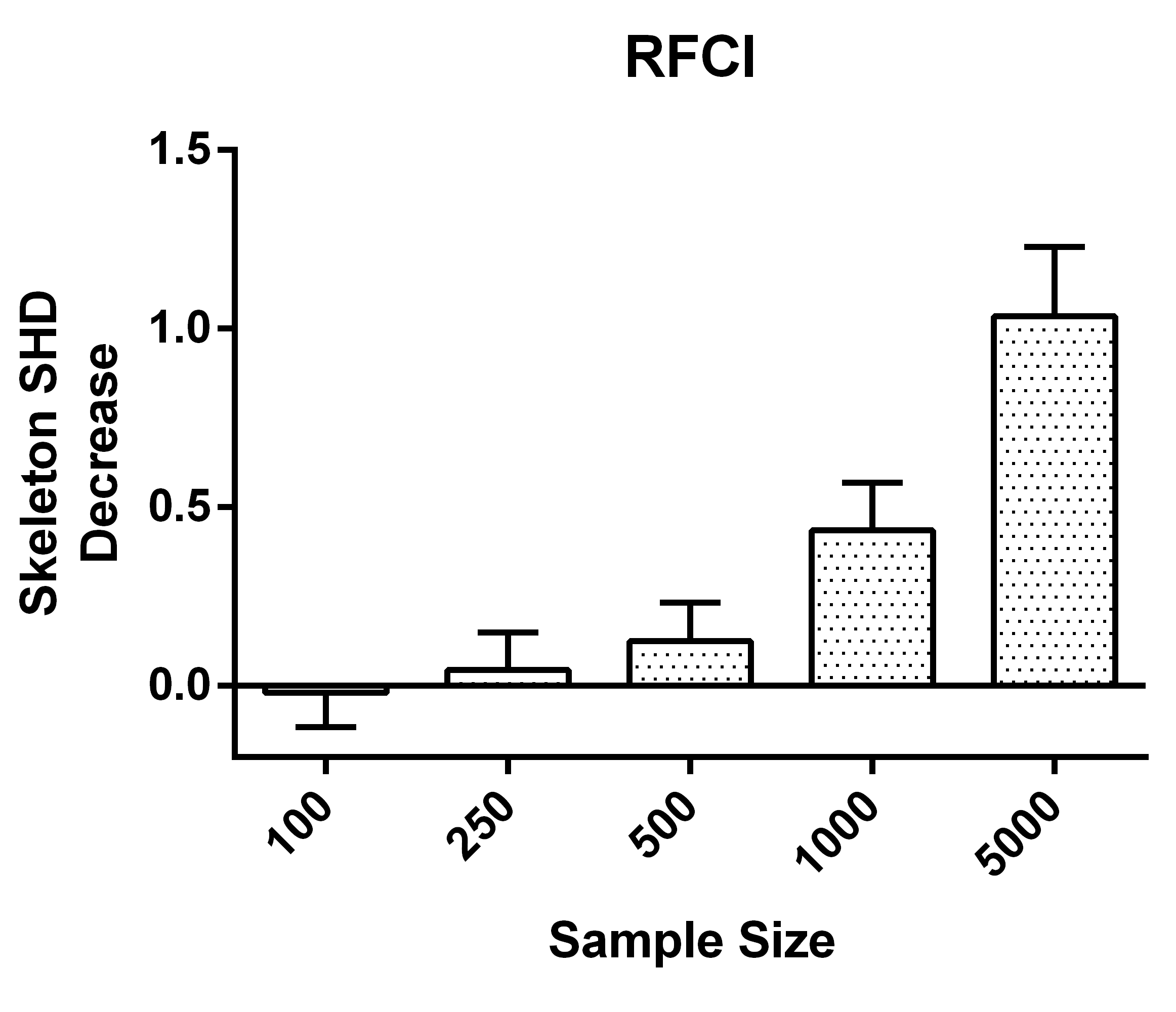}
  \caption{}
  \label{fig_heur_skel:RFCI_SHD_heur_skel}
\end{subfigure}

\caption{Test-wise deletion vs. heuristic test-wise deletion in skeleton discovery in the MNAR case. We find that test-wise deletion outperforms heuristic test-wise deletion by yielding smaller skeleton SHD values on average. Moreover, the margin gradually increases with sample size.} \label{fig_heur_skel}
\end{figure*}

\begin{figure*}
\centering
\begin{subfigure}{0.4\textwidth}
  \centering
  \includegraphics[width=0.8\linewidth]{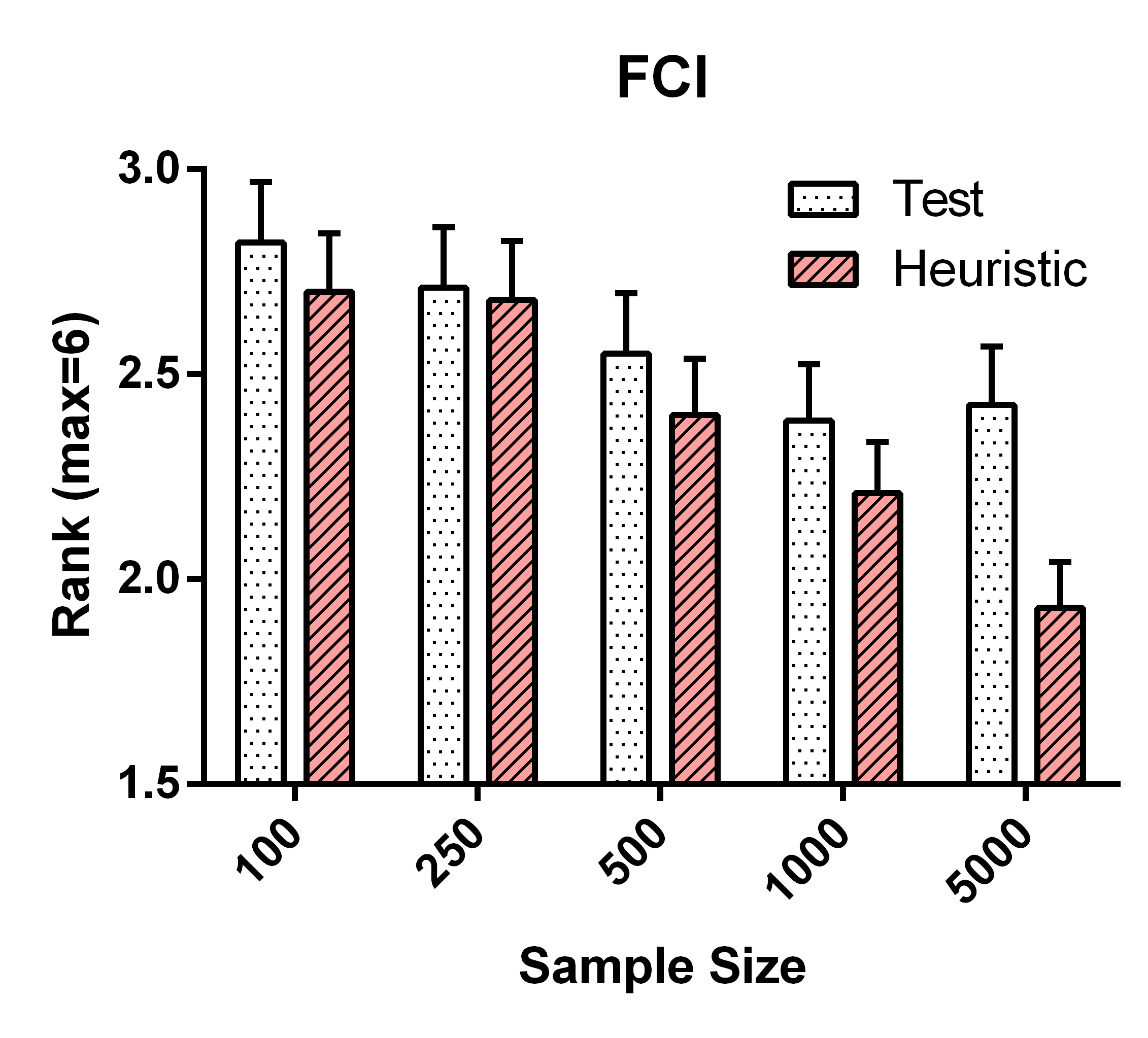}
  \caption{}
  \label{fig_heur_rank:FCI_heur_rank}
\end{subfigure}
\begin{subfigure}{0.4\textwidth}
  \centering
  \includegraphics[width=0.8\linewidth]{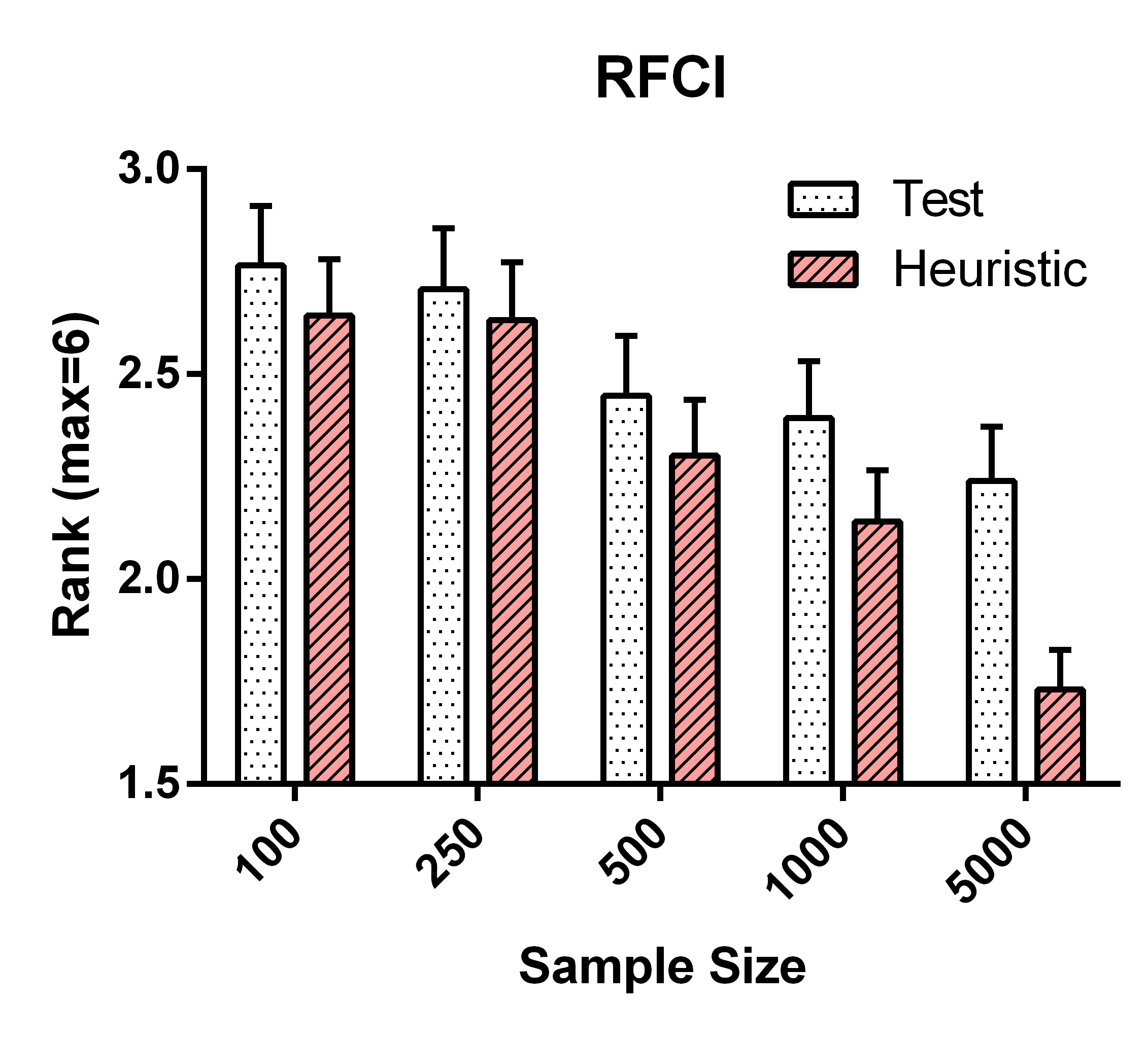}
  \caption{}
  \label{fig_heur_rank:RFCI_heur_rank}
  \end{subfigure}
  
  \begin{subfigure}{0.4\textwidth}
  \centering
  \includegraphics[width=0.8\linewidth]{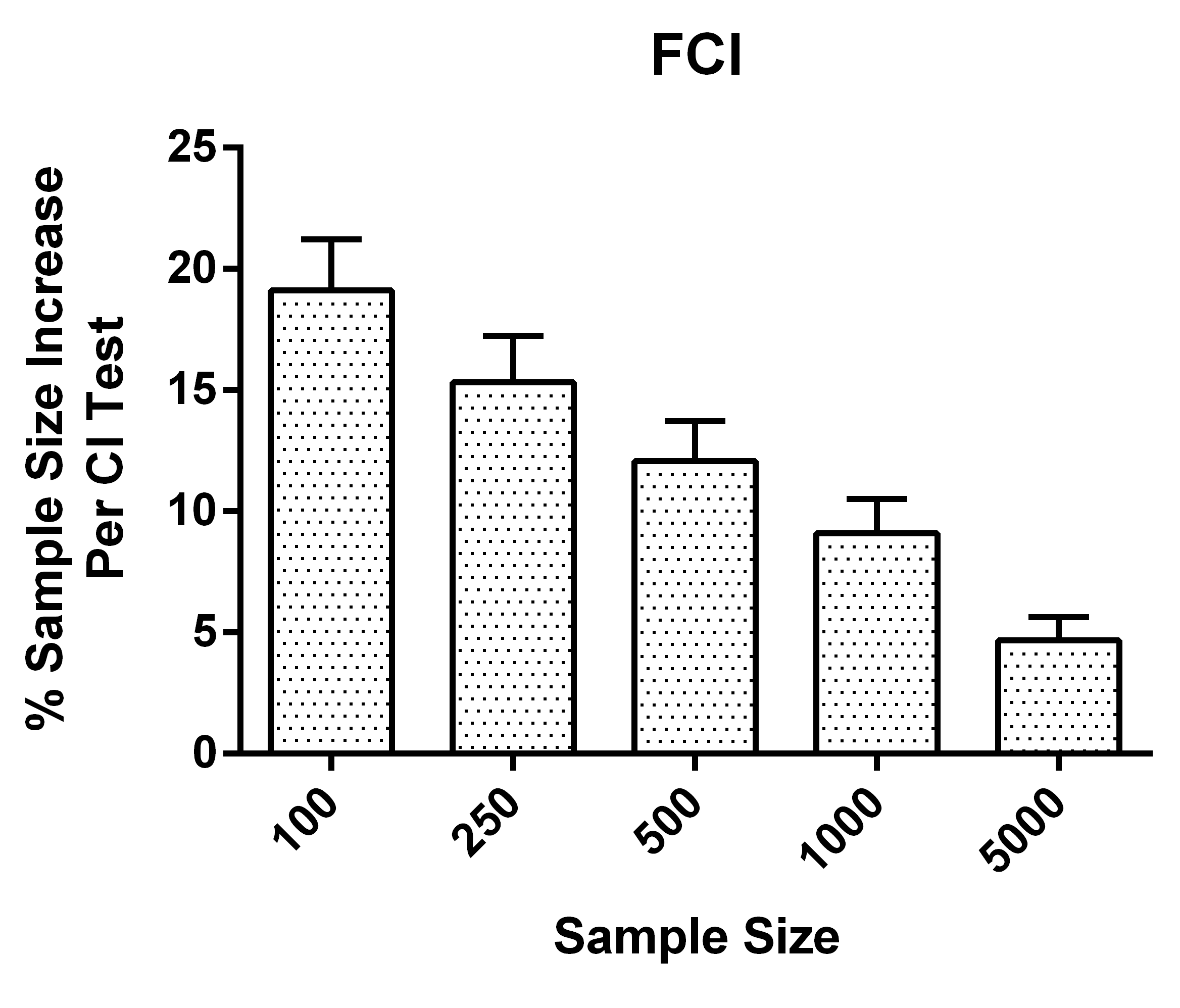}
  \caption{}
  \label{fig_heur_rank:FCI_samples_heur_MAR}
\end{subfigure}
\begin{subfigure}{0.4\textwidth}
  \centering
  \includegraphics[width=0.8\linewidth]{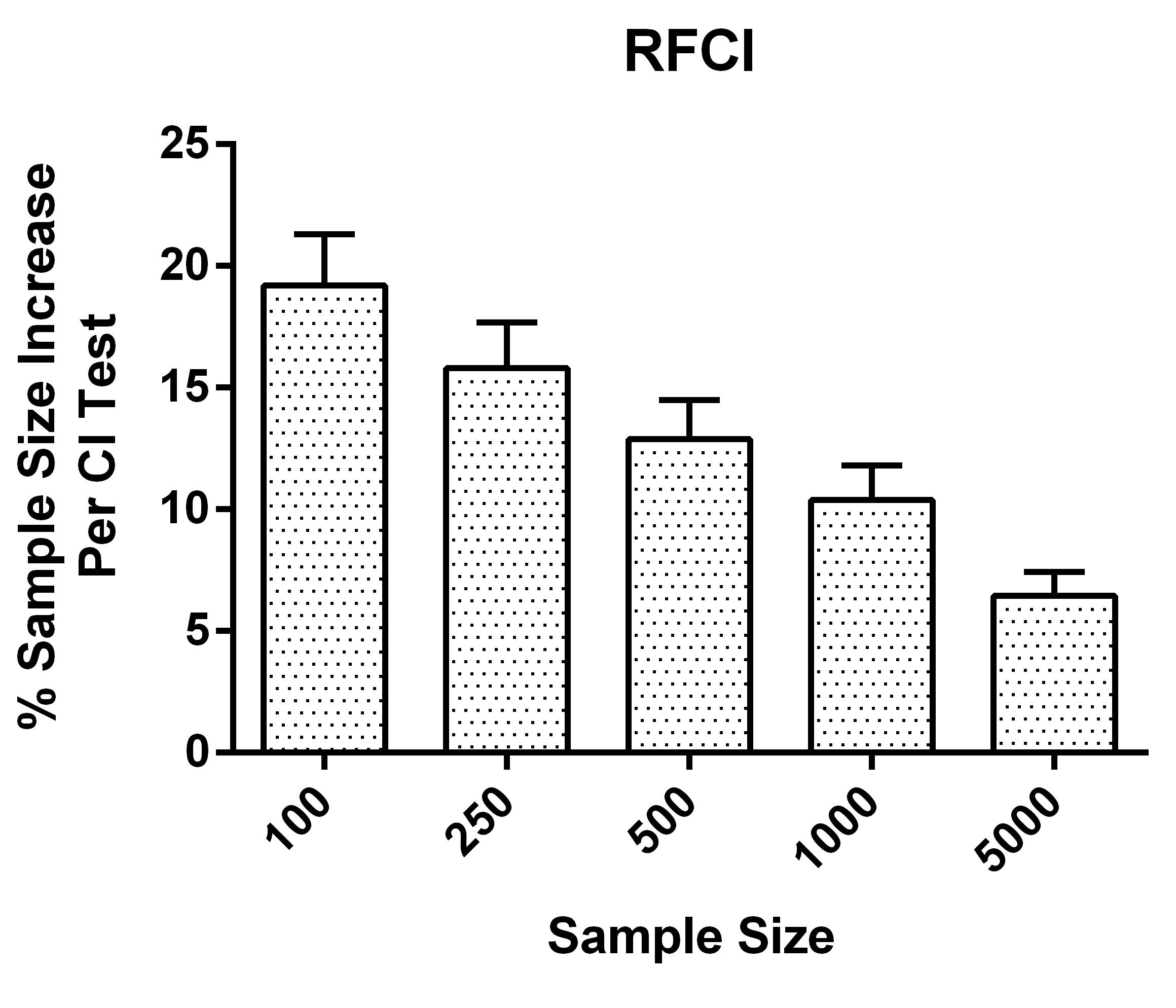}
  \caption{}
  \label{fig_heur_rank:RFCI_samples_heur_MAR}
\end{subfigure}

\caption{Test-wise deletion vs. heuristic test-wise deletion as compared to five imputation methods in the MAR case. Heuristic test-wise deletion has a smaller average rank than test-wise deletion for FCI in (a) and RFCI in (b). The performance increase of heuristic test-wise deletion results because of increased sample efficiency for FCI in (c) and RFCI in (d).} \label{fig_heur_rank}
\end{figure*}

\subsection{Test-Wise Deletion vs. Imputation} \label{appendix_tvi}

We have summarized the results of test-wise deletion vs. imputation for the MNAR case in Figure \ref{fig_imp}. Test-wise deletion outperforms all imputation methods by a large margin in this regime.

\begin{figure*}
\centering
\begin{subfigure}{0.4\textwidth}
  \centering
  \includegraphics[width=0.8\linewidth]{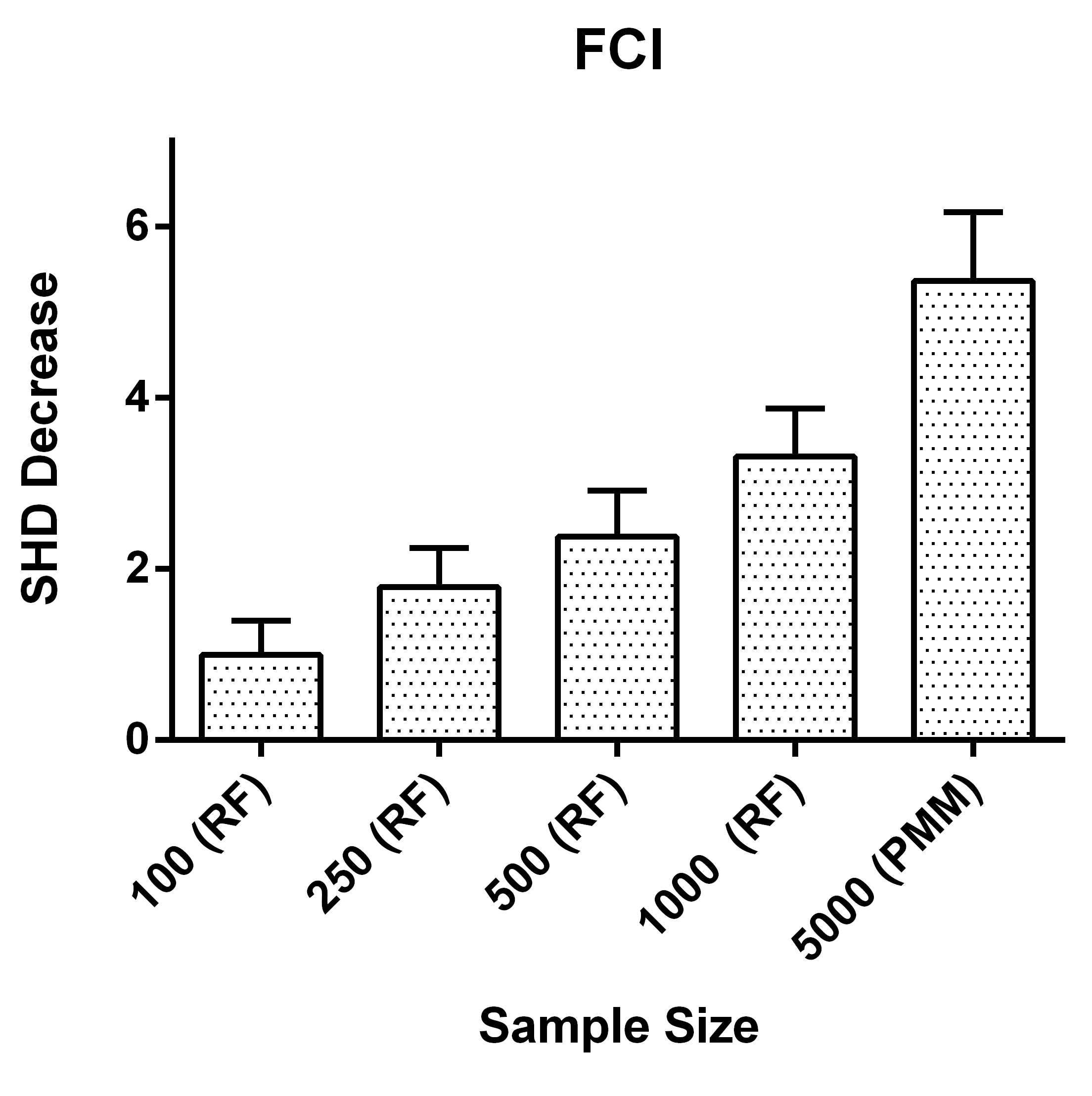}
  \caption{}
  \label{fig_imp:FCI_SHD_imp}
\end{subfigure}
\begin{subfigure}{0.4\textwidth}
  \centering
  \includegraphics[width=0.8\linewidth]{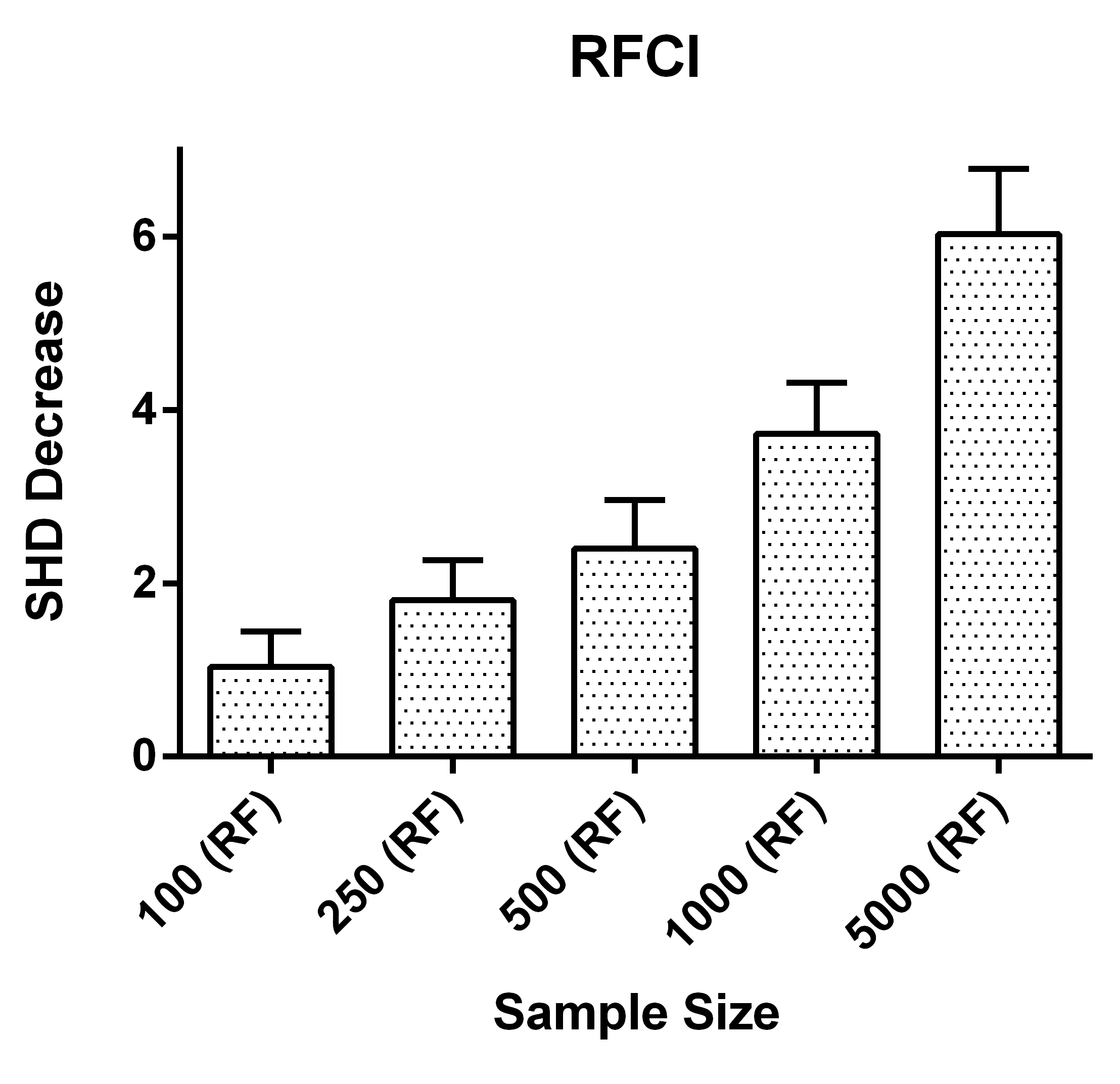}
  \caption{}
  \label{fig_imp:RFCI_SHD_imp}
\end{subfigure}

\caption{Performance of test-wise deletion vs. the best of five imputation methods in terms of the SHD when MNAR holds. Test-wise deletion outperforms the best imputation method (usually RF) by an increasing margin as sample size increases.} \label{fig_imp}
\end{figure*}

\end{document}